\providecommand{\U}[1]{\protect\rule{.1in}{.1in}}
\newtheorem{theorem}{Theorem}[section]
\newtheorem{definition}[theorem]{Definition}
\newtheorem{example}[theorem]{Example}
\newtheorem{lemma}[theorem]{Lemma}
\newtheorem{remark}[theorem]{Remark}
\newenvironment{proof}[1][Proof]{\noindent \textbf{#1.} }{\  \rule{0.5em}{0.5em}}
\begin{document}

\title{Simultaneously Moving Cops and Robbers }
\author{G. Konstantinidis and Ath. Kehagias}
\maketitle

\begin{abstract}
In this paper we study the \emph{concurrent} \emph{cops and robber
(CCCR)\ }game. CCCR follows the same rules as the classical, turn-based game,
except for the fact that the players move \emph{simultaneously}. The cops'
goal is to capture the robber and the \emph{concurrent cop number} of a graph
is defined the minimum number of cops which guarantees capture. For the
variant in which it it required to capture the robber in the \emph{shortest
possible time}, we let time to capture be the \emph{payoff function} of CCCR;
the (game theoretic) \emph{value} of CCCR is the optimal capture time and (cop
and robber) \emph{time optimal strategies} are the ones which achieve the
value. In this paper we prove the following.

\begin{enumerate}
\item For every graph $G$, the concurrent cop number is equal to the
\textquotedblleft classical\textquotedblright\ cop number.

\item For every graph $G$, CCCR has a value, the cops have an optimal strategy
and, for every $\varepsilon>0$, the robber has an $\varepsilon$-optimal strategy.

\end{enumerate}
\end{abstract}

\section{Introduction\label{sec01}}

In this paper we study the \emph{concurrent} \emph{cops and robber
(CCCR)\ }game. In the classical CR\ game
\cite{nowakowski1983vertex,quilliot1978jeux} each player observes the other
player's move before he performs his own. On the other hand, in concurrent
CR\ the players move \emph{simultaneously}. In all other aspects, the
concurrent game (henceforth CCCR) follows the same rules as the classical,
turn-based game (henceforth TBCR).

The CCCR game (similarly to TBCR)\ can be considered as either a \emph{game of
kind} (the cops' goal is to capture the robber) or a \emph{game of degree}
(the cops' goal is to capture the robber \emph{in the shortest possible
time})\footnote{This terminology is due to Isaacs
\cite{isaacs1999differential}.}.

This paper is organized as follows. In Section \ref{sec02} we define
preliminary concepts and notation and use these to define the CCCR\ game
rigorously. In Section \ref{sec03} we concentrate on the \textquotedblleft game of kind"
aspect:\ we define the \emph{concurrent cop number} $\widetilde{c}\left(
G\right)  $ and prove that, for every graph $G$, it is equal to the
\textquotedblleft classical\textquotedblright\ cop number $c\left(  G\right)
$. In Section \ref{sec04} we concentrate on the \textquotedblleft game of degree" aspect:\ we
equip CCCR with a \emph{payoff function} (namely the time required to capture
the robber) and prove that (a)\ CCCR\ has a game theoretic \emph{value}, (b)
the cops have an optimal strategy and (c) for every $\varepsilon>0$ the robber
has an $\varepsilon$-optimal strategy; in addition we provide an algorithm for
the computation of the value and the optimal strategies. In Section
\ref{sec05} we discuss related work. Finally, in Section \ref{sec06} we
present our conclusions and future research directions.

\section{Preliminaries\label{sec02}}

In this section, as well as in the rest of the paper, we will mainly concern
ourselves with the case of a single cop; this is reflected in the following
definitions and notation. In case $K>1$ cops are considered, this will be
stated explicitly; the extension of definitions and notation is straightforward.

\subsection{Definition of the CCCR Game}

Both CCCR\ and TBCR are played on an undirected, simple and connected graph
$G=\left(  V,E\right)  $ by two players called $C$ and $R$. Player $C$,
controlling $K$ cops (with $K\geq1$) pursues a single robber controlled by
player $R$ (we will sometimes call both the cops and robber \emph{tokens}). We
assume the reader is familiar with the rules of TBCR and proceed to present
the rules of CCCR for the case of $K=1$ (a single cop).

\begin{enumerate}
\item The game starts from given initial positions: the cop is located at
$x_{0}\in V$ and the robber at $y_{0}\in V$.

\item At the $t$-th round ($t\in\mathbb{N}$) $C$ moves the cop to $x_{t}\in
N\left[  x_{t-1}\right]  $ and simultaneously $R$ moves the robber to
$y_{t}\in N\left[  y_{t-1}\right]  $\footnote{$N\left[  u\right]  $ denotes
the \emph{closed neighborhood} of node $u$, i.e., the set containing $u$
itself and all nodes connected to $u$ by an edge.}$.$

\item At every round both players know the current cop and robber location
(and remember all past locations).

\item A \emph{capture} occurs at the smallest $t\in\mathbb{N}$ for which
either of the following conditions holds:

\begin{enumerate}
\item The cop is located at $x_{t}$, the robber is located at $y_{t}$, and
$x_{t}=y_{t}$. This capture condition is the same as in TBCR.

\item The cop is located at $x_{t-1}$ and moves to $y_{t-1}$, while the robber
is located at $y_{t-1}$ and moves to $x_{t-1}$. We will call this
\textquotedblleft\emph{en passant}\textquotedblright\ capture; it does not
have an analog in TBCR.
\end{enumerate}

\item $C$ wins if capture takes place for some $t\in\mathbb{N}$. Otherwise,
$R$ wins. The game analysis becomes easier if we assume that the game always
lasts an infinite number of rounds; if a capture occurs at $t_{c}$, then we
will have $x_{t}=y_{t}=x_{t_{c}}$ for all $t\geq t_{c}$.
\end{enumerate}

We will denote the above defined game, played on graph $G=\left(  V,E\right)
$ and starting from initial position $\left(  x,y\right)  \in V^{2}$ by
$\Gamma_{\left(  x,y\right)  }^{G}$. In case the game is played with $K$ cops,
it will be denoted by $\Gamma_{\left(  x,y\right)  }^{G,K}$ (in this case
$x\in V^{K}$).

\subsection{Nomenclature and Notation}

The following quantities will be used in the subsequent analysis (once again,
we present definitions for the case of $K=1$). Some of them require two
separate definitions:\ one for TBCR and another for CCCR.

\begin{definition}
\label{prp0201}A \emph{position} in TBCR is a triple $\left(  x,y,P\right)  $
where $x\in V$ is the cop location, $y\in V$ is the robber location and
$P\in\left\{  C,R\right\}  $ is the player whose turn it is to move. We also
have $\left\vert V\right\vert +1$ additional positions:

\begin{enumerate}
\item the position $\left(  \emptyset,\emptyset,C\right)  $ corresponds to the
beginning of the game, before either player has placed his token;

\item the positions $\left(  x,\emptyset,R\right)  $, $x\in V$, correspond to
the phase of the game in which C has placed the cop but R has not placed the robber.
\end{enumerate}

\noindent The set of all TBCR\ positions is denoted by $S=V\times
V\times\left\{  C,R\right\}  $.
\end{definition}

\begin{definition}
\label{prp0202}A \emph{position} in CCCR is a pair $\left(  \widetilde
{x},\widetilde{y}\right)  $ where $\widetilde{x}\in V$ is the cop location and
$\widetilde{y}\in V$ is the robber location. The set of all CCCR positions is
denoted by $\widetilde{S}=V\times V$.
\end{definition}

\begin{definition}
\label{prp0203}A \emph{history} is a position sequence of finite or infinite
length. The set of all game histories \ of \emph{any} finite length is denoted
by $S^{\ast}$ for TBCR\ and $\widetilde{S}^{\ast}$ for CCCR. The set of all
infinite game histories is denoted by $S^{\infty}$ for TBCR\ and
$\widetilde{S}^{\infty}$ for CCCR.
\end{definition}

In both TBCR and CCCR, the players' \emph{moves} are graph nodes, e.g.,
$x,y\in V$. Given the next move (in TBCR) or moves (in CCCR) the next game
position is determined by the \emph{transition function}, which encodes the
rules of the respective game.

\begin{definition}
\label{prp0204}In TBCR, the \emph{transition function} $Q:S\times V\rightarrow
S$ is defined as follows:%
\[%
\begin{array}
[c]{ll}%
\text{when }x=y: & Q\left(  \left(  x,y,C\right)  ,x^{\prime}\right)  =\left(
x,x,R\right) \\
\text{when }x\neq y\text{ and }x^{\prime}\in N\left[  x\right]  : & Q\left(
\left(  x,y,C\right)  ,x^{\prime}\right)  =\left(  x^{\prime},y,R\right) \\
\text{when }x\neq y\text{ and }x^{\prime}\notin N\left[  x\right]  : &
Q\left(  \left(  x,y,C\right)  ,x^{\prime}\right)  =\left(  x,y,R\right) \\
\text{when }x=y: & Q\left(  \left(  x,y,R\right)  ,y^{\prime}\right)  =\left(
x,x,C\right) \\
\text{when }x\neq y\text{ and }y^{\prime}\in N\left[  y\right]  : & Q\left(
\left(  x,y,R\right)  ,y^{\prime}\right)  =\left(  x,y^{\prime},C\right) \\
\text{when }x\neq y\text{ and }y^{\prime}\notin N\left[  y\right]  : &
Q\left(  \left(  x,y,R\right)  ,y^{\prime}\right)  =\left(  x,y,C\right)
\end{array}
\]

\end{definition}

\begin{definition}
\label{prp0205}In CCCR, the \emph{transition function} $\widetilde
{Q}:\widetilde{S}\times V\times V\rightarrow\widetilde{S}$ is defined as
follows:%
\[%
\begin{array}
[c]{ll}%
\text{when }x=y: & \widetilde{Q}\left(  \left(  x,y\right)  ,x^{\prime
},y^{\prime}\right)  =\left(  x,x\right) \\
\text{when }x\neq y\text{ and }x^{\prime}\in N\left[  x\right]  \text{ and
}y^{\prime}\in N\left[  y\right]  & \\
\qquad\qquad\text{if }x^{\prime}=y\text{ and }y^{\prime}=x: & \widetilde
{Q}\left(  \left(  x,y\right)  ,x^{\prime},y^{\prime}\right)  =\left(
x^{\prime},x^{\prime}\right) \\
\qquad\qquad\text{otherwise}: & \widetilde{Q}\left(  \left(  x,y\right)
,x^{\prime},y^{\prime}\right)  =\left(  x^{\prime},y^{\prime}\right) \\
\text{when }x\neq y\text{ and }x^{\prime}\notin N\left[  x\right]  \text{ and
}y^{\prime}\in N\left[  y\right]  : & \widetilde{Q}\left(  \left(  x,y\right)
,x^{\prime},y^{\prime}\right)  =\left(  x,y^{\prime}\right) \\
\text{when }x\neq y\text{ and }x^{\prime}\in N\left[  x\right]  \text{ and
}y^{\prime}\notin N\left[  y\right]  : & \widetilde{Q}\left(  \left(
x,y\right)  ,x^{\prime},y^{\prime}\right)  =\left(  x^{\prime},y\right) \\
\text{when }x\neq y\text{ and }x^{\prime}\notin N\left[  x\right]  \text{ and
}y^{\prime}\notin N\left[  y\right]  : & \widetilde{Q}\left(  \left(
x,y\right)  ,x^{\prime},y^{\prime}\right)  =\left(  x,y\right)
\end{array}
\]

\end{definition}

\noindent The above rules have the following consequences (which will
facilitate our subsequent analysis).

\begin{enumerate}
\item The CR\ game continues for an infinite number of rounds; but if a
capture occurs at some time $t_{c}$, the cop and robber locations remain fixed
for all subsequent times.

\item The transition function accepts \textquotedblleft
illegal\textquotedblright\ moves (e.g., $x^{\prime}\notin N\left[  x\right]
$) as input but \textquotedblleft ignores" them, in the sense that they have no influence on
the location of tokens.
\end{enumerate}

Roughly speaking, a \emph{strategy} is a rule which, given a game history,
prescribes a player's next move. In CCCR the players gain an advantage by
using \emph{randomized} or \emph{mixed strategies}.

\begin{definition}
\label{prp0206}A\ \emph{randomized }or \emph{mixed strategy} is a function
$\widetilde{\pi}:\widetilde{S}^{\ast}\times V\rightarrow\left[  0,1\right]  $,
which satisfies%
\[
\forall\left(  \left(  \widetilde{x}_{0},\widetilde{y}_{0}\right)  ,\left(
\widetilde{x}_{1},\widetilde{y}_{1}\right)  ,...,\left(  \widetilde{x}%
_{t},\widetilde{y}_{t}\right)  \right)  \in\widetilde{S}^{\ast}:\sum
_{\widetilde{z}\in V}\widetilde{\pi}\left(  \widetilde{z}|\left(
\widetilde{x}_{0},\widetilde{y}_{0}\right)  ,\left(  \widetilde{x}%
_{1},\widetilde{y}_{1}\right)  ,...,\left(  \widetilde{x}_{t},\widetilde
{y}_{t}\right)  \right)  =1
\]
and gives the probability that at time $t$ the player moves into node
$\widetilde{z}$, given that the game has started at position $\left(
\widetilde{x}_{0},\widetilde{y}_{0}\right)  $ and progressed through positions
$\left(  \widetilde{x}_{1},\widetilde{y}_{1}\right)  ,...,\left(
\widetilde{x}_{t},\widetilde{y}_{t}\right)  $.
\end{definition}

\noindent Two classes of strategies will be of special interest to us.

\begin{definition}
\label{prp0207}A strategy $\widetilde{\pi}$ is called \emph{memoryless} iff
\[
\forall\left(  \left(  \widetilde{x}_{0},\widetilde{y}_{0}\right)
,...,\left(  \widetilde{x}_{t},\widetilde{y}_{t}\right)  \right)
\in\widetilde{S}^{\ast},\forall\widetilde{z}\in V:\widetilde{\pi}\left(
\widetilde{z}|\left(  \widetilde{x}_{0},\widetilde{y}_{0}\right)  ,...,\left(
\widetilde{x}_{t},\widetilde{y}_{t}\right)  \right)  =\widetilde{\pi}\left(
\widetilde{z}|\left(  \widetilde{x}_{t},\widetilde{y}_{t}\right)  \right)  ,
\]
i.e., the player's move depends only on the \emph{current} game position.
\end{definition}

\begin{definition}
\label{prp0208}A strategy $\widetilde{\pi}$ is called \emph{deterministic} iff%
\[
\forall\left(  \left(  \widetilde{x}_{0},\widetilde{y}_{0}\right)
,...,\left(  \widetilde{x}_{t},\widetilde{y}_{t}\right)  \right)
\in\widetilde{S}:\exists\widetilde{z}:\widetilde{\pi}\left(  \widetilde
{z}|\left(  \widetilde{x}_{0},\widetilde{y}_{0}\right)  ,...,\left(
\widetilde{x}_{t},\widetilde{y}_{t}\right)  \right)  =1,
\]
i.e., for every game history $\left(  \widetilde{x}_{0},\widetilde{y}%
_{0}\right)  ,...,\left(  \widetilde{x}_{t},\widetilde{y}_{t}\right)  $, there
is a position $\widetilde{z}$ to which the player will move with certainty.
\end{definition}

\noindent If $\widetilde{\pi}$ is deterministic, it can be equivalently
described by a function $\widetilde{\sigma}:\widetilde{S}^{\ast}\rightarrow V$
which is determined by $\widetilde{\pi}$ as follows:%
\[
\widetilde{\sigma}\left(  \left(  \widetilde{x}_{0},\widetilde{y}_{0}\right)
,...,\left(  \widetilde{x}_{t},\widetilde{y}_{t}\right)  \right)
=\widetilde{z}\text{ iff }\widetilde{\pi}\left(  \widetilde{z}|\left(
\widetilde{x}_{0},\widetilde{y}_{0}\right)  ,...,\left(  \widetilde{x}%
_{t},\widetilde{y}_{t}\right)  \right)  =1.
\]
Similarly, if $\widetilde{\pi}$ is memoryless and deterministic, it can be
equivalently described by a function $\widetilde{\sigma}:\widetilde
{S}\rightarrow V$ which is determined by $\widetilde{\pi}$ as follows:%
\[
\widetilde{\sigma}\left(  \widetilde{x}_{t},\widetilde{y}_{t}\right)  =z\text{
iff }\widetilde{\pi}\left(  \widetilde{z}|\left(  \widetilde{x}_{t}%
,\widetilde{y}_{t}\right)  \right)  =1.
\]

The above definitions and remarks concern CCCR strategies. Regarding TBCR
strategies, it is well known \cite{hahn2006game} that both players lose
nothing by restricting themselves to memoryless deterministic strategies of
the form $\sigma:S\rightarrow V$. In other words, if player $P$ uses the
strategy $\sigma$ and the current game position is $\left(  x,y,P\right)  $
(which means that it is $P$'s turn to move) P moves his token into node
$\sigma\left(  x,y,P\right)  $. Obviously $P$ will only use $\sigma$ when it
is his turn to move; hence we can use the notation $\sigma_{C}\left(
x,y\right)  $ when talking about \emph{cop strategies} and $\sigma_{R}\left(
x,y\right)  $ when talking about \emph{robber strategies}. Note that a cop
strategy is also defined for the initial position $\left(  \emptyset
,\emptyset,C\right)  $ and a robber strategy is also defined for the initial
positions $\left(  x,\emptyset,R\right)  $ (for every $x\in V$).

\section{Cop Numbers\label{sec03}}

In the \textquotedblleft classical\textquotedblright\ TBCR\ game we have the following.

\begin{definition}
\label{prp0301}The cop number $c\left(  G\right)  $ of a graph $G$ is the
minimum number of cops sufficient to capture the robber when TBCR\ is played
(optimally by both players) on $G$.
\end{definition}

Note that in the above definition optimal play includes optimal initial
placement (in the 0-th round) of the cops and robber on $G$. On the other
hand, in the CCCR\ game $\Gamma_{\left(  x_{0},y_{0}\right)  }^{G,K}$ the
initial cops and robber positions are given (rather than chosen by the
players). A reasonable definition of cop number should account for all
possible initial positions. Hence we have the following.

\begin{definition}
\label{prp0302}The concurrent cop number $\widetilde{c}\left(  G\right)  $ of
graph $G$ is the minimum number of cops sufficient to ensure capture with
probability one for every initial position $\left(  \widetilde{x}%
_{0},\widetilde{y}_{0}\right)  $, when CCCR\ is played (optimally by both
players) on $G$.
\end{definition}

Note also the expression \textquotedblleft capture with probability
one\textquotedblright\ in Definition \ref{prp0302}. This is different from
\textquotedblleft certain capture\textquotedblright\ in the sense that there
may exist infinite game histories in which capture does not occur, but the
probability of any such infinite history materializing is zero\footnote{This
point is further discussed in Section \ref{sec06}.}.

In what follows, whenever we mention an arbitrary robber (or cop)\ move
sequence $y_{0},y_{1},y_{2},...$ we assume that it is a \emph{legal} move
sequence, i.e., for all $t$ we have $y_{t+1}\in N\left[  y_{t}\right]  $.
Also, if capture occurs at time $t_{c}$, the robber's (and cop's)\ location
remains fixed at $y_{t}=y_{t_{c}}$, irrespective of the moves $y_{t_{c}%
+1},y_{t_{c}+2},...$ .

\begin{lemma}
\label{prp0303}$c\left(  G\right)  =1\Rightarrow\widetilde{c}\left(  G\right)
=1$.
\end{lemma}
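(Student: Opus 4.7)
\textbf{Proof plan for Lemma \ref{prp0303}.} The plan is to induct on $\left|V\right|$, using the Nowakowski--Winkler/Quilliot characterization that $c(G)=1$ iff $G$ is \emph{dismantlable}; equivalently, there exists a vertex $v$ with a neighbor $u\neq v$ satisfying $N[v]\subseteq N[u]$ (call $v$ a \emph{corner} and $u$ its \emph{dominator}), and $G-v$ is again copwin in TBCR. The base case $|V|=1$ is trivial: the cop is placed where the robber is and captures in round $0$. For the inductive step, fix such a corner $v$ with dominator $u$ and set $G':=G-v$. By the inductive hypothesis $\widetilde{c}(G')=1$, so there is a CCCR cop strategy $\widetilde{\pi}'$ on $G'$ that captures with probability $1$ from every initial position $(\widetilde{x}_0,\widetilde{y}_0)\in V(G')\times V(G')$, against any (mixed, history-dependent) robber strategy.

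The construction of a strategy $\widetilde{\pi}$ on $G$ uses the retraction $f\colon V\to V\setminus\{v\}$ defined by $f(v)=u$ and $f(w)=w$ for $w\neq v$; since $N[v]\subseteq N[u]$, the map $f$ is a graph homomorphism, so for every legal trajectory $y_0,y_1,\dots$ of the robber in $G$, the projection $\hat{y}_t:=f(y_t)$ is a legal trajectory in $G'$. The cop plays in two alternating phases. In \emph{Phase A} the cop pretends the robber is located at the virtual position $\hat{y}_t$ and plays according to $\widetilde{\pi}'$; by the inductive hypothesis there is a (random) time $T<\infty$ almost surely at which $x_T=\hat{y}_T$. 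At that moment, if $y_T\neq v$ then $y_T=\hat{y}_T=x_T$ and the robber is caught; otherwise $y_T=v$ and $x_T=u$, and we enter \emph{Phase B}: at round $T+1$ the cop picks $x_{T+1}$ uniformly at random from $N[v]$, which is a legal move since $N[v]\subseteq N[u]=N[x_T]$. Because $y_{T+1}\in N[v]$, we get $x_{T+1}=y_{T+1}$ with probability at least $1/|N[v]|\geq 1/|V|$, which is a capture. If Phase B fails, the cop restarts Phase A from the new positions $(x_{T+1},y_{T+1})$.

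Each Phase A terminates in finite time with probability $1$ (by induction), and the subsequent Phase B captures with probability at least $p:=1/|V|>0$, \emph{conditionally} on the history. Therefore, by a standard conditional Borel--Cantelli / geometric trials argument, the probability of never capturing is at most $\lim_{k\to\infty}(1-p)^k=0$, proving $\widetilde{c}(G)\leq1$; since $\widetilde{c}(G)\geq 1$ trivially, equality holds.

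The main subtlety I expect will be rigorously justifying that $\widetilde{\pi}'$ applies to the virtual game: the robber's strategy on $G$ is mixed and depends on the full history in $V\times V$, so the induced virtual-robber strategy in $G'$ depends on a strictly richer history than the one $\widetilde{\pi}'$ was designed against. This is handled by observing that the inductive guarantee for $\widetilde{\pi}'$ is a uniform one (it works against \emph{every} robber strategy, in particular every strategy obtained by pushing forward through $f$ and coarsening the observed history). A secondary technical point is that the ``restart'' at each failed Phase B must be formally defined as a single history-dependent strategy on $G$, and one must check that the attempts are stochastically independent enough for the Borel--Cantelli step; the cleanest way is to bound the conditional probability of no-capture in each attempt uniformly by $1-p$, which does not require genuine independence.
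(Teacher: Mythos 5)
Your overall architecture (reduce to a strategy that captures with probability $1$, then close with a geometric-trials/Borel--Cantelli bound) matches the paper's, but your route to the key sub-strategy is genuinely different: the paper takes a memoryless TBCR winning strategy $\sigma_C^{\ast}$, has the cop guess the robber's next move uniformly over $N[\widetilde{y}_t]$, and observes that all $T$ guesses in a capture window are correct with probability at least $(1/n)^T$; you instead induct on dismantlability and push the robber through the retraction $f$ onto $G-v$. Unfortunately, as written your Phase A has a gap. The inductive hypothesis guarantees that the \emph{shadow} is captured with probability $1$ in the CCCR game on $G'$, and capture in CCCR includes the en passant case; your dichotomy at the capture time $T$ (``either $y_T=\hat{y}_T=x_T$ and the robber is caught, or $y_T=v$ and $x_T=u$'') only accounts for a coincidence capture $x_T=f(y_T)$. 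When the shadow is captured en passant, the real game need not land in either configuration.

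Concretely: suppose the real robber sits at the corner $v$, so the shadow sits at $u$, and the cop is at some $a\in N(v)\setminus\{u,v\}$ (hence $a\in N(u)$ since $N[v]\subseteq N[u]$). If $\widetilde{\pi}'$ moves the cop from $a$ to $u$ while the robber moves from $v$ to $a$, then the shadow moves from $u$ to $a$, the simulated game on $G'$ registers an en passant capture and terminates --- but in the real game the cop ends at $u$ and the robber at $a\neq u$: no capture, and no Phase B, since the robber is no longer at $v$. A second failure mode: cop at $u$, robber at $w\in N(v)\setminus\{u,v\}$, cop moves to $w$ while the robber moves into $v$; the shadow moves $w\to u$, an en passant capture of the shadow, yet the cop ends at $w$, which need not dominate $v$, so the uniform guess over $N[v]$ is not a legal move. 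In both cases the ``attempt'' ends without the conditional capture probability of at least $p=1/|V|$ that your Borel--Cantelli step requires, and you have not ruled out that the robber forces such false captures on every attempt; you would need either to show this cannot recur forever or to modify the cop's behaviour near $v$ (e.g., randomize between chasing the shadow and guarding against the swap). There are also minor, fixable boundary issues you should handle explicitly: the cop's initial position may be $v$ itself, and after a failed Phase B guess the cop may land on $v$, in both cases outside the domain of $\widetilde{\pi}'$.
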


\begin{proof}
We select an arbitrary graph $G$ with $c\left(  G\right)  =1$ and fix it for
the rest of the proof. Both TBCR and CCCR will be played on this $G$. We let
$n=\left\vert V\right\vert $, i.e., $n$ is the number of nodes of $G$.

We will prove the proposition by constructing a (deterministic and memoryless)
cop strategy $\widetilde{\pi}_{C}^{\#}$ which guarantees, for every starting
position, CCCR capture with probability $1$.\ 

An essential component of $\widetilde{\pi}_{C}^{\#}$ is a deterministic cop
strategy $\widetilde{\sigma}_{C}^{\ast}$, constructed from another
deterministic, memoryless cop strategy $\sigma_{C}^{\ast}$ which guarantees
capture in the TBCR\ game. Since $c\left(  G\right)  =1$ we know
\cite{hahn2006game} that such a $\sigma_{C}^{\ast}$ exists and guarantees
capture in at most $T$ rounds, where $T$ depends only on $G$. Furthermore
recall that we have defined TBCR so that after capture takes place both C\ and
R stay in place. The rest of the proof will be divided in two parts.

\noindent\underline{\textbf{Part 1}}. Consider the CCCR game and assume that,
for every time $t$, $C$ knows $R$'s next move (this assumption will be removed
in Part 2). Take an arbitrary starting position $\widetilde{s}_{0}=\left(
\widetilde{x}_{0},\widetilde{y}_{0}\right)  $ and suppose that at time $t$,
when the position is $\left(  \widetilde{x}_{t},\widetilde{y}_{t}\right)  $,
$C$ (knowing that $R$'s next move will be $\widetilde{y}_{t+1}$) plays
$\widetilde{x}_{t+1}=\widetilde{\sigma}_{C}^{\ast}\left(  \widetilde{x}%
_{t},\widetilde{y}_{t+1}\right)  =\sigma_{C}^{\ast}\left(  \widetilde{x}%
_{t},\widetilde{y}_{t+1}\right)  $.\footnote{Note that $\widetilde{\sigma}%
_{C}^{\ast}$ is deterministic and only uses two inputs: one is $\widetilde
{x}_{t}$ (from the previous round) ands the other is $\widetilde{y}_{t+1}$
(from the current round). Hence $\widetilde{\sigma}_{C}^{\ast}$ is memoryless
in the sense that it only requires knowledge of the immediate past position,
but it is also \emph{prescient} in the sense that it requires knowledge of the
current robber move.} Then, for any robber moves $\widetilde{y}_{1}%
,\widetilde{y}_{2},...$ in rounds $t=1,2,...$ the sequence of game positions
will be:
\[
\left(  \widetilde{x}_{0},\widetilde{y}_{0}\right)  ,\left(  \widetilde{x}%
_{1}=\sigma_{C}^{\ast}\left(  \widetilde{x}_{0},\widetilde{y}_{1}\right)
,\widetilde{y}_{1}\right)  ,\left(  \widetilde{x}_{2}=\sigma_{C}^{\ast}\left(
\widetilde{x}_{1},\widetilde{y}_{2}\right)  ,\widetilde{y}_{2}\right)
,...,\left(  \widetilde{x}_{t}=\sigma_{C}^{\ast}\left(  \widetilde{x}%
_{t-1},\widetilde{y}_{t}\right)  ,\widetilde{y}_{t}\right)  ,...
\]
We will prove that $\widetilde{x}_{T}=\widetilde{y}_{T}$, i.e., capture
results in at most $T$ rounds and this will be true with certainty for any
starting position $\widetilde{s}_{0}=\left(  \widetilde{x}_{0},\widetilde
{y}_{0}\right)  $ and robber moves $\widetilde{y}_{1},\widetilde{y}_{2},...$ thereafter.

To show this, consider a TBCR game in which, at the end of the initial round
($t=0$) the position is $\left(  x_{0},y_{0},C\right)  =\left(  \widetilde
{x}_{0},\widetilde{y}_{1},C\right)  $. Further suppose that $C$ uses
$\sigma_{C}^{\ast}$ and $R$ plays the moves $y_{1},y_{2},...$ with
$y_{t}=\widetilde{y}_{t+1}$ (for $t=0,1,...$). Note that, given $y_{0}%
=\widetilde{y}_{1}$, and also that $\widetilde{y}_{1},\widetilde{y}_{2},...$
are legal robber moves in CCCR, the resulting robber moves $y_{1},y_{2},...$in
TBCR are also legal. Moreover recall that, for any given starting position
$(x_{0},y_{0})$, when the robber moves are $y_{1},y_{2},...$ and $C$ uses
$\sigma_{C}^{\ast}$, we get a sequence of cop and robber locations of the
following form:%
\[
x_{0},y_{0},x_{1}=\sigma_{C}^{\ast}\left(  x_{0},y_{0}\right)  ,y_{1}%
,x_{2}=\sigma_{C}^{\ast}\left(  x_{1},y_{1}\right)  ,y_{2},...,x_{t}%
=\sigma_{C}^{\ast}\left(  x_{t-1},y_{t-1}\right)  ,y_{t},...
\]
Given $y_{t}=\widetilde{y}_{t+1}$ for $t=0,1,...,$ $x_{0}=\widetilde{x}_{0}$
and that C uses $\sigma_{C}^{\ast}$, we get $x_{1}=\sigma_{C}^{\ast}\left(
x_{0},y_{0}\right)  =\sigma_{C}^{\ast}\left(  \widetilde{x}_{0},\widetilde
{y}_{1}\right)  =\widetilde{x}_{1}$, $x_{2}=\sigma_{C}^{\ast}\left(
x_{1},y_{1}\right)  =\sigma_{C}^{\ast}\left(  \widetilde{x}_{1},\widetilde
{y}_{2}\right)  =\widetilde{x}_{2}$, $...$, $x_{t}=\sigma_{C}^{\ast}\left(
x_{t-1},y_{t-1}\right)  =\sigma_{C}^{\ast}\left(  \widetilde{x}_{t-1}%
,\widetilde{y}_{t}\right)  =\widetilde{x}_{t}$, $...$ . Thus the resulting
sequence of cop and robber locations in TBCR is:%
\begin{equation}
\widetilde{x}_{0},\widetilde{y}_{1},\widetilde{x}_{1},\widetilde{y}%
_{2},\widetilde{x}_{2},\widetilde{y}_{3},...,\widetilde{x}_{t},\widetilde
{y}_{t+1},... \label{eq002}%
\end{equation}
Since $\sigma_{C}^{\ast}$ guarantees capture by time $T$ in TBCR, we have
$x_{T}=y_{T}$, irrespective of the moves $y_{1},y_{2},...$ . In fact we will
have $x_{T}=y_{T-1}$, i.e., at most in the first (i.e. cop) phase of round
$T$, $C$ captures $R$, or else (i.e., if $x_{T}\neq y_{T-1}$) $R$ can stay put
in this round and then $x_{T}\neq y_{T}$, which is a contradiction. Since
$x_{T}=\widetilde{x}_{T}$ and $y_{T-1}$ $=\widetilde{y}_{T}$ from
$x_{T}=y_{T-1}$ we have $\widetilde{x}_{T}=\widetilde{y}_{T}$.

We conclude that also in the CCCR\ game, for any starting position
$\widetilde{s}_{0}=(\widetilde{x}_{0},\widetilde{y}_{0})$ and subsequent
robber moves $\widetilde{y}_{1},\widetilde{y}_{2},...$, , capture takes place
by the $T$-th round at the latest. We repeat that this holds under the
assumption that: in each round $t$, $C$ knows $R$'s next move $\widetilde
{y}_{t+1}$.

\noindent\underline{\textbf{Part 2}}. In the actual CCCR game $C$\ will not
know $R$'s next move $\widetilde{y}_{t+1}$; however he can always \emph{guess}
$\widetilde{y}_{t+1}$ to be $v$. Suppose that, when $R$ is at $\widetilde
{y}_{t}$, $C$\ guesses with uniform probability $\frac{1}{\left\vert N\left[
\widetilde{y}_{t}\right]  \right\vert }$ that $R$ will move to $v\in N\left[
\widetilde{y}_{t}\right]  $. Let $\widetilde{y}_{t+1}$ be $R$'s actual move at
$t+1$ and $\widehat{y}_{t+1}$ be $C$'s guess of that move. We have
\[
\Pr\left(  \widehat{y}_{t+1}=\text{ }v|\widetilde{y}_{t+1}=v\right)  =\frac
{1}{\left\vert N\left[  \widetilde{y}_{t}\right]  \right\vert }\geq\frac{1}{n}%
\]
and
\begin{align*}
&  \Pr\left(  C\text{ guesses }R\text{'s move correctly}\right) \\
&  =\Pr\left(  \widehat{y}_{t+1}=\widetilde{y}_{t+1}\right)  =\sum_{v\in
N\left[  \widetilde{y}_{t}\right]  }\Pr\left(  \widehat{y}_{t+1}=\text{
}v|\widetilde{y}_{t+1}=v\right)  \Pr\left(  \widetilde{y}_{t+1}=v\right)
\geq\frac{1}{n}\sum_{v\in N\left[  \widetilde{y}_{t}\right]  }\Pr\left(
\widetilde{y}_{t+1}=v\right)  =\frac{1}{n}.
\end{align*}
In other words, $C$ guesses correctly $R$'s next move with probability at
least $\frac{1}{n}$. It follows that $C$ guesses correctly $R$'s next $T$
moves (and captures $R$)\ with probability at least $\left(  \frac{1}%
{n}\right)  ^{T}$.

Now we define the following set of CCCR infinite game histories:%
\begin{align*}
\forall k\in\mathbb{N}:A_{k}  &  =\left\{  \mathbf{s}:\mathbf{s}\in
\widetilde{S}^{\infty}\text{ and }R\text{ is still free after the first
}k\cdot T\text{ rounds}\right\}  \text{,}\\
A  &  =\lim\sup A_{k}=\cap_{m=1}^{\infty}\cup_{k=m}^{\infty}A_{k}.
\end{align*}
Since $A_{k+1}\subseteq A_{k}$ (for all $k\in\mathbb{N}$)\ we have
\[
A=\cap_{m=1}^{\infty}\cup_{k=m}^{\infty}A_{k}=\cap_{m=1}^{\infty}%
A_{m}=\left\{  \mathbf{s}:\text{ }\mathbf{s}\in\widetilde{S}^{\infty}\text{
and }\forall m\in\mathbb{N}:\text{ }R\text{ is still free after the first
}m\cdot T\text{ rounds}\right\}  .
\]
In other words, $A$ is the set of all CCCR\ infinite game histories in which
$R$ is never captured. Since
\[
\sum_{k=1}^{\infty}\Pr\left(  A_{k}\right)  \leq\sum_{k=1}^{\infty}\left(
1-\left(  \frac{1}{n}\right)  ^{T}\right)  ^{k}<\infty
\]
we have (from the first Borel-Cantelli lemma \cite{billingsley2008probability}%
)\ that $\Pr\left(  A\right)  =0$.

To sum up:\ using the deterministic memoryless strategy $\sigma_{C}^{\ast}$ in
conjunction with uniform guessing, we obtain a randomized memoryless strategy
$\widetilde{\pi}_{C}^{\#}$ which guarantees capture with probability one in
the CCCR\ game; thus $\widetilde{c}\left(  G\right)  =1$ as claimed.
\end{proof}

In case of a graph $G$ with $c(G)=K$, it is straightforward to extend the
previous argument using $K$ cops. Note however that, at the end of the proof
we will know that:\ if $K$ cops are \emph{required} to capture the robber in
TBCR, then $K$ cops \emph{suffice} to capture (with probability one)\ the
robber in CCCR. Hence we have the following.

\begin{lemma}
\label{prp0304}$c\left(  G\right)  =K\Rightarrow\widetilde{c}\left(  G\right)
\leq K$.
\end{lemma}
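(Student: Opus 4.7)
The plan is to mirror the proof of Lemma \ref{prp0303} step by step, with the single cop promoted to a $K$-tuple of cops (which, following the paper's convention, I still denote by $\widetilde{x}_t\in V^K$). Since $c(G)=K$, we obtain \cite{hahn2006game} a deterministic, memoryless TBCR cop strategy $\sigma_C^{\ast}:V^K\times V\to V^K$ that guarantees capture in at most $T$ rounds, where $T$ depends only on $G$. Using $\sigma_C^{\ast}$ as a backbone, I will build a randomized memoryless CCCR strategy $\widetilde{\pi}_C^{\#}$ for $K$ cops that captures with probability one from every initial position.

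For the prescient phase (Part 1), suppose that at the start of each round $C$ knows the robber's next move $\widetilde{y}_{t+1}$, and define the prescient CCCR cop strategy by $\widetilde{\sigma}_C^{\ast}(\widetilde{x}_t,\widetilde{y}_{t+1})=\sigma_C^{\ast}(\widetilde{x}_t,\widetilde{y}_{t+1})$. Couple the CCCR play to an auxiliary TBCR game starting from the post-placement position $(\widetilde{x}_0,\widetilde{y}_1,C)$ in which $R$ plays the legal moves $y_t=\widetilde{y}_{t+1}$. A straightforward induction shows that the TBCR cop tuple satisfies $x_t=\widetilde{x}_t$ for every $t$, so the TBCR guarantee supplies some $t^{\ast}\leq T$ and some cop index $i$ with $x^{(i)}_{t^{\ast}}=y_{t^{\ast}-1}$ (a capture in the cop phase, by exactly the ``stay put'' argument used in Lemma \ref{prp0303}), which in CCCR notation reads $\widetilde{x}^{(i)}_{t^{\ast}}=\widetilde{y}_{t^{\ast}}$, i.e., a CCCR capture by round $T$.

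For the guessing phase (Part 2), in the actual CCCR game $C$ replaces the unknown $\widetilde{y}_{t+1}$ by a guess $\widehat{y}_{t+1}$ drawn uniformly at random from $N[\widetilde{y}_t]$. Since $|N[\widetilde{y}_t]|\leq n$, the guess is correct with probability at least $1/n$ regardless of $R$'s (possibly randomized) choice, and independence across rounds gives a probability of at least $(1/n)^T$ that $C$ guesses correctly throughout any block of $T$ consecutive rounds. Defining $A_k$ and $A=\limsup A_k$ exactly as in Lemma \ref{prp0303}, the first Borel-Cantelli lemma yields $\Pr(A)=0$. Hence the randomized memoryless $K$-cop strategy $\widetilde{\pi}_C^{\#}$, which plays $\widetilde{\sigma}_C^{\ast}(\widetilde{x}_t,\widehat{y}_{t+1})$ in each round, captures with probability one from every initial position, proving $\widetilde{c}(G)\leq K$.

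The only real concern is bookkeeping: one must check that the one-step shifted coupling between the prescient CCCR game and the TBCR game goes through when the cop side is a $K$-tuple rather than a single vertex, and that the extraction of a cop-phase capture from the TBCR guarantee still works when several cops could collide with the robber simultaneously. Fortunately, there is still only one agent whose move must be guessed, so the bound $1/n$ per round and hence the Borel-Cantelli estimate are entirely unaffected by $K$.
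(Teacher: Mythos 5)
Your proposal is correct and matches the paper's intent exactly: the paper gives no separate proof of Lemma \ref{prp0304}, stating only that it is ``straightforward to extend the previous argument using $K$ cops,'' and your write-up carries out precisely that extension (prescient coupling with the $K$-cop TBCR strategy, then uniform guessing of the single robber move plus Borel--Cantelli). The observation that only the robber's move needs to be guessed, so the $1/n$ per-round bound is independent of $K$, is exactly the point that makes the extension routine.
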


Next we show the \textquotedblleft reverse\textquotedblright\ of Lemma
\ref{prp0301}.

\begin{lemma}
\label{prp0305}$\widetilde{c}\left(  G\right)  =1\Rightarrow c\left(
G\right)  =1$.
\end{lemma}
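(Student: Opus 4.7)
The plan is to prove the contrapositive: assume $c(G) \geq 2$ and show $\widetilde{c}(G) \geq 2$. Since $c(G) \geq 2$, the classical theory \cite{nowakowski1983vertex,quilliot1978jeux,hahn2006game} gives an initial position $(x_0, y_0)$ and a memoryless deterministic robber strategy $\sigma_R^{\ast}: V \times V \to V$ which evades capture forever in TBCR against every cop strategy. The crucial structural property is that at every reachable cop-turn position $(x, y)$ one has $y \notin N[x]$, and $\sigma_R^{\ast}(x, y) \in N[y] \setminus N[x]$; i.e., the robber always moves to a vertex outside the cop's closed neighborhood.

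I will then construct a CCCR robber strategy $\widetilde{\sigma}_R$ (deterministic and memoryless) by $\widetilde{\sigma}_R(x_t, y_t) := \sigma_R^{\ast}(x_t, y_t)$, using the cop's current position as a proxy for the unknown simultaneous move. I claim that starting from $(x_0, y_0)$ the robber evades forever in CCCR against any cop strategy, which gives $\widetilde{c}(G) \geq 2$ and hence the desired contrapositive. The per-round verification is direct: writing $y_{t+1} = \sigma_R^{\ast}(x_t, y_t) \in N[y_t] \setminus N[x_t]$ and letting $x_{t+1} \in N[x_t]$ denote the cop's move, we have $y_{t+1} \neq x_{t+1}$ (ruling out direct capture, since $x_{t+1} \in N[x_t]$ while $y_{t+1} \notin N[x_t]$) and $y_{t+1} \neq x_t$ (ruling out \emph{en passant} capture, which would require $y_{t+1} = x_t \in N[x_t]$).

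The principal obstacle is the inductive step: showing that $N[y_t] \setminus N[x_t] \neq \emptyset$ persists throughout the CCCR play, so that $\sigma_R^{\ast}$ remains well-defined and a safe move is always available. While this invariant automatically holds on the TBCR robber-winning region, the cop in CCCR may drive the trajectory temporarily outside that region: we may have $y_{t+1} \in N[x_{t+1}]$ because $x_{t+1} \neq x_t$. To handle this, I plan to use the equivalence $c(G) = 1 \Leftrightarrow G$ is dismantlable: since $G$ is not dismantlable, no reachable configuration $(x, y)$ can have $N[y] \subseteq N[x]$, so $\sigma_R^{\ast}$ always applies. Should this structural route prove too delicate, a backup is a randomized variant $\widetilde{\sigma}_R(x_t, y_t) = \sigma_R^{\ast}(\widehat{x}_{t+1}, y_t)$ with $\widehat{x}_{t+1}$ uniformly guessed from $N[x_t]$, in the spirit of Lemma \ref{prp0303}; this yields positive-probability finite-time evasion at once, and one can pass to positive-probability infinite-time evasion by a coupling argument (for example via K\"onig's lemma on the history tree).
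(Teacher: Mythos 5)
Your construction is essentially the paper's: simulate a memoryless winning TBCR robber strategy $\sigma_{R}^{\ast}$ inside CCCR by feeding it the cop's last observed position, and note that a move into $N\left[ y_{t}\right] \setminus N\left[ x_{t}\right]$ is safe against both direct and en passant capture because the cop's simultaneous move must land in $N\left[ x_{t}\right]$. (The paper inserts one ``stay put'' round to synchronize indices, but the recursion $\widetilde{y}_{t}=\sigma_{R}^{\ast}\left( \widetilde{x}_{t-1},\widetilde{y}_{t-1}\right)$ is the same.) However, you stop short of a proof at exactly the step you flag as the principal obstacle, and both patches you offer are defective. The dismantlability patch rests on a false claim: a graph with $c\left( G\right) \geq2$ can perfectly well contain dominated vertices (attach a pendant vertex $p$ to a long cycle; then $N\left[ p\right] \subseteq N\left[ q\right]$ for its neighbor $q$, yet the graph is not cop-win), so non-dismantlability does not rule out configurations with $N\left[ y\right] \subseteq N\left[ x\right]$. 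The randomized backup proves too little: to get $\widetilde{c}\left( G\right) >1$ under Definition \ref{prp0302} you must exhibit, for some initial position, robber play against which no single-cop strategy captures with probability one; ``positive-probability evasion for each finite horizon'' plus K\"{o}nig's lemma does not by itself produce a strategy whose capture probability is bounded away from $1$, let alone equal to $0$ as the deterministic argument gives.

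The good news is that the obstacle is illusory and the fix is one line. Take $\sigma_{R}^{\ast}$ memoryless and winning on the entire TBCR robber-win region $W_{R}$ (such a strategy exists by the standard theory you cite). The invariant to propagate is not ``$N\left[ y_{t}\right] \setminus N\left[ x_{t}\right] \neq\emptyset$'' but ``$\left( x_{t},y_{t}\right)$ with the robber to move lies in $W_{R}$.'' If $\left( x_{t},y_{t},R\right) \in W_{R}$, then $y_{t+1}=\sigma_{R}^{\ast}\left( x_{t},y_{t}\right)$ gives $\left( x_{t},y_{t+1},C\right) \in W_{R}$, and a cop-to-move position in $W_{R}$ remains in $W_{R}$ after \emph{every} cop move $x_{t+1}\in N\left[ x_{t}\right]$ --- that is what it means for a cop-to-move position to be robber-winning. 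Hence $\left( x_{t+1},y_{t+1},R\right) \in W_{R}$ even if $y_{t+1}\in N\left[ x_{t+1}\right]$, and every robber-to-move position in $W_{R}$ does admit a reply in $N\left[ y\right] \setminus N\left[ x\right]$ (a winning reply must avoid $N\left[ x\right]$, or the cop captures on his next turn). So the trajectory never leaves $W_{R}$ and your per-round verification applies at every step. Equivalently --- and this is how the paper phrases it --- the CCCR play is, after reindexing, a legal TBCR play in which $R$ follows $\sigma_{R}^{\ast}$ against the cop move sequence $\widetilde{x}_{0},\widetilde{x}_{1},\dots$, so every position reached is one at which the winning guarantee of $\sigma_{R}^{\ast}$ applies directly and no separate well-definedness argument is needed.
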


\begin{proof}
We select an arbitrary graph $G$ with $\widetilde{c}\left(  G\right)  =1$ and
fix it for the rest of the proof. Both TBCR and CCCR will be played on this
$G$. The Lemma can be stated equivalently as:
\[
c\left(  G\right)  >1\Rightarrow\widetilde{c}\left(  G\right)  >1
\]
and this is what we will prove.

If $c\left(  G\right)  >1$ then there exists a (memoryless and
deterministic)\ winning robber strategy $\sigma_{R}^{\ast}$ for TBCR with one
cop on $G$. More specifically, $\sigma_{R}^{\ast}$ guarantees that, for every
cop starting position $x_{0}$, the robber will never be captured.

Choose any $\widetilde{x}_{0}\in V$ and let $\widetilde{y}_{0}=\sigma
_{R}^{\ast}$ $\left(  \widetilde{x}_{0},\emptyset\right)  $. Using $\sigma
_{R}^{\ast}$, we will construct a CCCR robber strategy $\widetilde{\sigma}%
_{R}^{\ast}$ such that: when CCCR (played on $G$ with a single cop) starts
from position $\left(  \widetilde{x}_{0},\widetilde{y}_{0}\right)  $ and R
uses $\widetilde{\sigma}_{R}^{\ast}$, the capture probability is zero. This,
clearly, implies that $\widetilde{c}\left(  G\right)  >1$.

It suffices to define $\widetilde{\sigma}_{R}^{\ast}$ only for the case when
CCCR\ starts from $\left(  \widetilde{x}_{0},\widetilde{y}_{0}\right)  $, as follows.

\begin{enumerate}
\item In round $t=1$: $\widetilde{y}_{1}=\widetilde{y}_{0}$ ($R$ stays put);

\item In rounds $t=2,3,...$, $R$ plays according to $\sigma_{R}^{\ast}$ . In
other words, if $\widetilde{x}_{t-1}=u$ and $\widetilde{y}_{t-1}=v$, then
$\widetilde{y}_{t}=\widetilde{\sigma}_{R}^{\ast}(u,v)=\sigma_{R}^{\ast}(u,v)$.
\end{enumerate}

Clearly, $\widetilde{\sigma}_{R}^{\ast}$ is not strictly memoryless. The move
$\widetilde{y}_{1}=\widetilde{y}_{0}$ depends not only on the game position
$\left(  \widetilde{x}_{0},\widetilde{y}_{0}\right)  $ but also on the fact
that this is the first round. However, the part of $\widetilde{\sigma}%
_{R}^{\ast}$ used in rounds $t\geq2$ is memoryless.

Suppose that in CCCR (starting from $\left(  \widetilde{x}_{0},\widetilde
{y}_{0}\right)  $)$\ R$ plays the strategy $\widetilde{\sigma}_{R}^{\ast}$
while $C$\ plays any move sequence $\widetilde{x}_{1},\widetilde{x}_{2},...$ .
To prove that capture will never occur, consider a TBCR in which $R$ plays the
strategy $\sigma_{R}^{\ast}$ and $C$\ plays the same move sequence
$\widetilde{x}_{0},\widetilde{x}_{1},...$ as in CCCR. Since $\sigma_{R}^{\ast
}$ is winning, capture will never take place in TBCR; as will be shown, this
implies capture will never occur in CCCR either and, since this holds for
\emph{any }$\widetilde{x}_{1},\widetilde{x}_{2},...$, we will conclude that
$\widetilde{c}\left(  G\right)  >1$.

Let $y_{0},y_{1},y_{2}...$ be the robber moves occurring in TBCR, given that
$R$\ plays $\sigma_{R}^{\ast}$ and $C$\ plays $\widetilde{x}_{0},\widetilde
{x}_{1},...$. Let us use $d\left(  u,v\right)  $ to denote the distance of
nodes $u,v$ in $G$, i.e., the length of shortest path between $u$ and $v$.
Obviously we have%
\begin{equation}
\forall t\geq0:d\left(  \widetilde{x}_{t+1},y_{t}\right)  \geq1 \label{eq0201}%
\end{equation}
(if we had $d\left(  \widetilde{x}_{t+1},y_{t}\right)  =0$ then $\sigma
_{R}^{\ast}$ would not be a winning strategy). Furthermore%
\begin{equation}
\forall t\geq0:\widetilde{y}_{t+1}=y_{t}. \label{eq0202}%
\end{equation}
Indeed, $\widetilde{y}_{1}=y_{0}$ by construction and if, for some $n$, we
have $\widetilde{y}_{n}=y_{n-1}$, then%
\[
\widetilde{y}_{n+1}=\widetilde{\sigma}_{R}^{\ast}(\widetilde{x}_{n}%
,\widetilde{y}_{n})=\sigma_{R}^{\ast}(\widetilde{x}_{n},y_{n-1})=y_{n}.
\]
From (\ref{eq0201}) and (\ref{eq0202})\ follows that%
\begin{equation}
\forall t:1\leq d\left(  \widetilde{x}_{t},\widetilde{y}_{t}\right)  .
\label{eq0203}%
\end{equation}
This almost completes the proof that capture never occurs in CCCR. However, we
must also consider the possibility of an \textquotedblleft en
passant\textquotedblright\ capture, i.e., the case $\widetilde{x}%
_{t+1}=\widetilde{y}_{t}\ $ and $\widetilde{y}_{t+1}=\widetilde{x}_{t}$. But
this would mean
\[
d\left(  \widetilde{x}_{t},y_{t}\right)  =d\left(  \widetilde{x}%
_{t},\widetilde{y}_{t+1}\right)  =d\left(  \widetilde{x}_{t},\widetilde{x}%
_{t}\right)  =0;
\]
in other words, we would have capture in TBCR which contradicts the assumption
that $\sigma_{R}^{\ast}$ is a winning robber strategy. Hence \textquotedblleft
en passant\textquotedblright\ capture is also impossible in CCCR. The proof is complete.
\end{proof}

It is straightforward to extend the above for the case of $\widetilde
{c}\left(  G\right)  =K$ and obtain \ the following.

\begin{lemma}
\label{prp0306}$\widetilde{c}(G)=K\Rightarrow c(G)\leq K$.
\end{lemma}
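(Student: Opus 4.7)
The plan is to mimic the proof of Lemma~\ref{prp0305} with cop positions promoted to tuples in $V^{K}$; the argument is structurally unchanged and requires only notational bookkeeping across the $K$ cops. I would prove the contrapositive: $c(G) > K \Rightarrow \widetilde{c}(G) > K$. First I would invoke the fact (as in Lemma~\ref{prp0305}) that $c(G) > K$ yields a memoryless deterministic TBCR robber strategy $\sigma_{R}^{\ast}$ which, against any $K$ cops, guarantees that the robber is never captured. I would then pick an arbitrary initial CCCR cop tuple $\widetilde{x}_{0} = (\widetilde{x}_{0}^{(1)}, \ldots, \widetilde{x}_{0}^{(K)}) \in V^{K}$, set $\widetilde{y}_{0} = \sigma_{R}^{\ast}(\widetilde{x}_{0}, \emptyset)$, and define a CCCR robber strategy $\widetilde{\sigma}_{R}^{\ast}$ exactly as before: stay put on round $t=1$, and from round $t=2$ onward copy $\sigma_{R}^{\ast}$ using the previous cop tuple and previous robber location as its two inputs.

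Next I would couple the resulting CCCR play against arbitrary cop moves $\widetilde{x}_{1}, \widetilde{x}_{2}, \ldots$ with a parallel TBCR play in which the cops use the same tuple sequence and the robber follows $\sigma_{R}^{\ast}$, producing TBCR robber positions $y_{0}, y_{1}, y_{2}, \ldots$. A direct induction identical to the one yielding~(\ref{eq0202}) shows $\widetilde{y}_{t+1} = y_{t}$ for all $t \geq 0$, since $\sigma_{R}^{\ast}$ is memoryless. The winning property of $\sigma_{R}^{\ast}$ against $K$ cops in the parallel TBCR play gives the natural $K$-cop analogue of~(\ref{eq0201}): for every $t \geq 0$ and every $i \in \{1, \ldots, K\}$, $d(\widetilde{x}_{t+1}^{(i)}, y_{t}) \geq 1$. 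Combined with $\widetilde{y}_{t+1} = y_{t}$, this yields $d(\widetilde{x}_{t}^{(i)}, \widetilde{y}_{t}) \geq 1$ for all $t$ and all $i$, ruling out same-position (type (4a)) capture by any of the $K$ cops.

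The only substantive point remaining is the en passant clause (type (4b)), which must now be checked for each cop. If cop $i$ and the robber swap between rounds $t$ and $t+1$, then $\widetilde{x}_{t+1}^{(i)} = \widetilde{y}_{t}$ and $\widetilde{y}_{t+1} = \widetilde{x}_{t}^{(i)}$; substituting $\widetilde{y}_{t+1} = y_{t}$ gives $y_{t} = \widetilde{x}_{t}^{(i)}$, i.e., the TBCR robber would end round $t$ at cop $i$'s location, contradicting the winning property of $\sigma_{R}^{\ast}$. So no cop can trigger en passant capture either, and the $K$ CCCR cops never capture the robber starting from $(\widetilde{x}_{0}, \widetilde{y}_{0})$; hence $\widetilde{c}(G) > K$. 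The main (and only) obstacle is notational: promoting the single-cop distance inequality to a simultaneous statement over all $K$ cops and performing the en passant check coordinate-wise. No new conceptual ingredient beyond Lemma~\ref{prp0305} is required.
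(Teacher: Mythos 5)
Your proposal is correct and takes essentially the same route as the paper: the authors give no separate argument for Lemma~\ref{prp0306}, stating only that the proof of Lemma~\ref{prp0305} extends straightforwardly to $K$ cops, and your write-up (contrapositive, coupling with a parallel TBCR play, the induction giving $\widetilde{y}_{t+1}=y_{t}$, and the coordinate-wise distance and en passant checks) is precisely that extension carried out in detail.
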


Now we can prove our main result.

\begin{theorem}
\label{prp0307}$c\left(  G\right)  =K\Leftrightarrow\widetilde{c}\left(
G\right)  =K$.
\end{theorem}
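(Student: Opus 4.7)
The plan is to deduce the theorem almost immediately from Lemmas \ref{prp0304} and \ref{prp0306}, which together give inequalities in both directions between $c(G)$ and $\widetilde{c}(G)$. The key observation is that these two lemmas, although stated with the hypothesis that the cop number is \emph{exactly} $K$, combine to yield $c(G)=\widetilde{c}(G)$ whenever either quantity is well-defined, which then trivially implies the biconditional.

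First I would argue the forward direction: assume $c(G)=K$. Lemma \ref{prp0304} gives $\widetilde{c}(G)\le K$ directly. For the matching lower bound, suppose toward contradiction that $\widetilde{c}(G)=K'$ with $K'<K$. Then Lemma \ref{prp0306} applied at $K'$ gives $c(G)\le K'<K$, contradicting $c(G)=K$. Hence $\widetilde{c}(G)=K$.

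For the reverse direction I would argue symmetrically: assume $\widetilde{c}(G)=K$. Lemma \ref{prp0306} gives $c(G)\le K$. If $c(G)=K''<K$, then Lemma \ref{prp0304} applied at $K''$ yields $\widetilde{c}(G)\le K''<K$, contradicting the assumption. Hence $c(G)=K$.

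There is essentially no obstacle beyond bookkeeping; the real content has already been established in Lemmas \ref{prp0303}--\ref{prp0306}. The only subtle point worth flagging in the write-up is that Lemmas \ref{prp0304} and \ref{prp0306} are \emph{monotone} in $K$ in the sense used above: if $c(G)=K''$ then in particular $\widetilde{c}(G)\le K''$ (and similarly for the other lemma), so applying them at values below the assumed cop number is legitimate. Once that is noted, the proof is just the two short contradiction arguments sketched above.
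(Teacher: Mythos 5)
Your proposal is correct and follows essentially the same route as the paper: both directions are obtained by combining Lemma \ref{prp0304} and Lemma \ref{prp0306} via the same two short contradiction arguments. No gaps.
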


\begin{proof}
Assume that $c\left(  G\right)  =K$. By Lemma \ref{prp0304} we have $c\left(
G\right)  =K\Rightarrow\widetilde{c}\left(  G\right)  \leq K$; if
$\widetilde{c}\left(  G\right)  =K^{\prime}<K$, then by Lemma \ref{prp0306} we
have $c(G)\leq K^{\prime}<K=c\left(  G\right)  $, which is a contradiction.
Thus $c\left(  G\right)  =K\Rightarrow\widetilde{c}\left(  G\right)  =K$.

Conversely, assume that $\widetilde{c}(G)=K$. By Lemma \ref{prp0306} we have
$\widetilde{c}(G)=K\Rightarrow c(G)\leq K$; if $c(G)=K^{\prime}<K$, then by
Lemma \ref{prp0304} we have $\widetilde{c}(G)=K^{\prime}<K=\widetilde
{c}\left(  G\right)  $, which is a contradiction. Thus $\widetilde
{c}(G)=K\Rightarrow c(G)=K$.
\end{proof}

\section{Time Optimality\label{sec04}}

\subsection{Existence of Value and Optimal Strategies}

Recall that $\Gamma_{\left(  x_{0},y_{0}\right)  }^{G}$ denotes the CCCR game
played on graph $G$ by a single cop starting at location $x_{0}$ and a single
robber starting at location $y_{0}$. We equip $\Gamma_{\left(  x_{0}%
,y_{0}\right)  }^{G}$ with a \emph{payoff function}, defined as follows. First
define the auxiliary function
\[
\forall\left(  x,y\right)  \in V^{2}:r\left(  x,y\right)  =\left\{
\begin{array}
[c]{ll}%
1 & \text{iff }x\neq y\\
0 & \text{iff }x=y
\end{array}
\right.
\]
where $x$ and $y$ are cop and robber locations, respectively. Suppose that for
every round of $\Gamma_{\left(  x_{0},y_{0}\right)  }^{G}$ in which the robber
remains uncaptured, C pays R one unit of utility and denote by $v_{\left(
x_{0},y_{0}\right)  }^{G}\left(  \widetilde{\pi}_{C},\widetilde{\pi}%
_{R}\right)  $ \ the total amount collected by R (obviously it depends on the
strategies $\widetilde{\pi}_{C},\widetilde{\pi}_{R}$). Then the \emph{payoff}
of $\Gamma_{\left(  x_{0},y_{0}\right)  }^{G}$ is
\[
v_{\left(  x_{0},y_{0}\right)  }^{G}\left(  \widetilde{\pi}_{C},\widetilde
{\pi}_{R}\right)  =\mathbb{E}\left(  \sum_{t=0}^{\infty}r\left(  x_{t}%
,y_{t}\right)  \right)
\]
where $\mathbb{E}\left(  \cdot\right)  $ denotes expected value and, for
notational brevity, the dependence of $x_{t},y_{t}$ on $\widetilde{\pi}%
_{C},\widetilde{\pi}_{R}$ has been suppressed.

Following the terminology of \cite{filar1997competitive}, we recognize that
CCCR equipped with the above payoff is a \emph{positive stochastic game}, R is
Player 1 or the \emph{Maximizer} and C is Player 2 or the \emph{Minimizer}.
These terms reflect the fact that R (resp. C)\ chooses $\widetilde{\pi}_{R}$
(resp. $\widetilde{\pi}_{C}$) to maximize (resp. to minimize)$\ v_{\left(
x,y\right)  }^{G}\left(  \widetilde{\pi}_{C},\widetilde{\pi}_{R}\right)  $. We
always have
\begin{equation}
\sup_{\widetilde{\pi}_{R}}\inf_{\widetilde{\pi}_{C}}v_{\left(  x,y\right)
}^{G}\left(  \widetilde{\pi}_{C},\widetilde{\pi}_{R}\right)  \leq
\inf_{\widetilde{\pi}_{C}}\sup_{\widetilde{\pi}_{R}}v_{\left(  x,y\right)
}^{G}\left(  \widetilde{\pi}_{C},\widetilde{\pi}_{R}\right)  .
\end{equation}
The following  is standard game theoretic terminology \cite{filar1997competitive}.
\begin{definition}
\label{prp0401}If we have%
\begin{equation}
\inf_{\widetilde{\pi}_{C}}\sup_{\widetilde{\pi}_{R}}v_{\left(  x,y\right)
}^{G}\left(  \widetilde{\pi}_{C},\widetilde{\pi}_{R}\right)  =\sup
_{\widetilde{\pi}_{R}}\inf_{\widetilde{\pi}_{C}}v_{\left(  x,y\right)  }%
^{G}\left(  \widetilde{\pi}_{C},\widetilde{\pi}_{R}\right)  \label{eq0501}%
\end{equation}
then we denote the common quantity of (\ref{eq0501}) by $\widehat{v}_{\left(
x,y\right)  }^{G}$ and call it the \emph{value} of $\Gamma_{\left(
x,y\right)  }^{G}$.
\end{definition}

\begin{definition}
\label{prp0401a}We denote the \emph{capture time of }$G$ by $CT\left(
G\right)  $ and define it by
\[
CT\left(  G\right)  =\max_{\left(  x,y\right)  \in V^{2}}\widehat{v}_{\left(
x,y\right)  }^{G}.
\]

\end{definition}

What is the connection between the $\Gamma_{\left(  x,y\right)  }^{G}$
for  various $\left(  x,y\right)  \in V^{2}$? It is natural to assume
that if at some stage of $\Gamma_{\left(  x,y\right)  }^{G}$  we reach the
position $\left(  x^{\prime},y^{\prime}\right)  $ then we can play the
remaining portion of $\Gamma_{\left(  x,y\right)  }^{G}$ as if \emph{we are
just  starting the game }$\Gamma_{\left(  x^{\prime},y^{\prime}\right)  }^{G}%
$. This plausible assumption can be proved rigorously (see
\cite{shapley1953stochastic} and \cite[pp.89-91]{filar1997competitive}) and
has the important consequence that, for a given $G$, $\widehat{v}_{\left(
x,y\right)  }^{G}$\emph{ is the same for every game }$\Gamma_{\left(
x^{\prime},y^{\prime}\right)  }^{G}$ (and hence it is correct to omit
mention of a specific game in the notations $v_{\left(  x,y\right)  }%
^{G}\left(  \widetilde{\pi}_{C},\widetilde{\pi}_{R}\right)  $ and
$\widehat{v}_{\left(  x,y\right)  }^{G}$). An additional important consequence
is the existence of \emph{memoryless optimal strategies which are the same for
all }$\Gamma_{\left(  x,y\right)  }^{G}$ \emph{games}, as will be seen in
Theorem \ref{prp0403}. Before stating and proving this theorem we need some
additional definitions.

\begin{definition}
\label{prp0402}Given $\varepsilon\geq0$, we say that the cop strategy
$\ \widetilde{\pi}_{C}^{\varepsilon}$ is $\varepsilon$\emph{-optimal} (for the
game $\Gamma_{\left(  x,y\right)  }^{G}$) iff%
\[
\left\vert \widehat{v}_{\left(  x,y\right)  }^{G}-\sup_{\widetilde{\pi}_{R}%
}v_{\left(  x,y\right)  }^{G}\left(  \widetilde{\pi}_{C}^{\varepsilon
},\widetilde{\pi}_{R}\right)  \right\vert \leq\varepsilon.
\]
Similarly, we say that the robber strategy $\widetilde{\pi}_{R}^{\varepsilon}$
is $\varepsilon$-optimal (for the game $\Gamma_{\left(  x,y\right)  }^{G}$)
iff
\[
\left\vert \widehat{v}_{\left(  x,y\right)  }^{G}-\inf_{\widetilde{\pi}_{C}%
}v_{\left(  x,y\right)  }^{G}\left(  \widetilde{\pi}_{C},\widetilde{\pi}%
_{R}^{\varepsilon}\right)  \right\vert \leq\varepsilon.
\]
A 0-optimal (cop or robber)\ strategy is simply called \emph{optimal}.
\end{definition}

If both $\widetilde{\pi}_{C}^{\ast}$ and $\widetilde{\pi}_{R}^{\ast}$ are
optimal, then we have $\widehat{v}_{\left(  x,y\right)  }^{G}=v_{\left(
x,y\right)  }^{G}\left(  \widetilde{\pi}_{C}^{\ast},\widetilde{\pi}_{R}^{\ast
}\right)  $. The main facts about the $\Gamma_{\left(  x,y\right)  }^{G}%
$\ games are summarized in the following.

\begin{theorem}
\label{prp0403}For every graph $G=\left(  V,E\right)  $ and every $\left(
x,y\right)  \in V^{2}$ the following hold.

\begin{enumerate}
\item For every $\left(  x,y\right)  \in V^{2}$, the game $\Gamma_{\left(
x,y\right)  }^{G}$\ has the value $\widehat{v}_{\left(  x,y\right)  }^{G}$. 

\item There exists a memoryless cop strategy $\widetilde{\pi}_{C}^{\ast}$
which is optimal  for every game $\Gamma_{\left(  x,y\right)  }^{G}$. For
every $\varepsilon>0$, there exists a memoryless robber strategy
$\widetilde{\pi}_{R}^{\varepsilon}$ which is $\varepsilon$-optimal for every
game $\Gamma_{\left(  x,y\right)  }^{G}$. 

\item $V^{2}$ can be partitioned into the sets $V_{1}$ and $V_{2}$ defined by%
\[
V_{1}=\left\{  \left(  x,y\right)  :\widehat{v}_{\left(  x,y\right)  }%
^{G}<\infty\right\}  ,\quad V_{2}=\left\{  \left(  x,y\right)  :\widehat
{v}_{\left(  x,y\right)  }^{G}=\infty\right\}  .
\]

\item If $c\left(  G\right)  =1$, then $V_{1}=V^{2}$, i.e., $\widehat
{v}_{\left(  x,y\right)  }^{G}<\infty$ for every $\left(  x,y\right)  \in
V^{2}$.
\end{enumerate}
\end{theorem}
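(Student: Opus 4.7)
The plan is to recognize CCCR as a two-player zero-sum \emph{positive stochastic game} on a finite state space, invoke the classical existence theorem for such games to obtain Parts 1 and 2, derive Part 3 as a tautology, and use Lemma \ref{prp0303} for Part 4.

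First, I would observe that CCCR fits the framework of a finite positive stochastic game: the state space $\widetilde{S}=V\times V$ is finite, each player has the finite action set $V$ at every state (with illegal moves simply ignored by $\widetilde{Q}$), the stage reward $r(x,y)\in\{0,1\}$ is non-negative, and the payoff $v^G_{(x,y)}(\widetilde{\pi}_C,\widetilde{\pi}_R)=\mathbb{E}\sum_{t=0}^\infty r(x_t,y_t)$ is the total undiscounted expected reward accruing to R (the Maximizer), with the diagonal states $\{(x,x):x\in V\}$ acting as zero-reward absorbing sinks. The subgame-continuation property, already noted by the authors just before the theorem, ensures that the value and the notion of a memoryless (stationary) optimal strategy are well defined and independent of the chosen initial position.

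Parts 1 and 2 would then follow directly from the general theory of finite positive stochastic games as developed in \cite{filar1997competitive}: for every such game there exists a value $\widehat{v}^G_{(x,y)}\in[0,\infty]$ at every initial state; the Minimizer (the cop) admits a stationary strategy that is optimal uniformly in the initial state; and for every $\varepsilon>0$ the Maximizer (the robber) admits a stationary strategy that is $\varepsilon$-optimal uniformly in the initial state. Part 3 is then the tautology obtained by partitioning according to whether $\widehat{v}^G_{(x,y)}<\infty$ or $=\infty$. For Part 4, assume $c(G)=1$ and let $\widetilde{\pi}_C^\#$ and $T$ be the cop strategy and TBCR capture-time bound from the proof of Lemma \ref{prp0303}. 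That proof shows that, against any robber strategy, $\Pr(A_k)\le(1-n^{-T})^k$, where $A_k$ is the event that $R$ is still free at round $kT$. Since $\Pr(R\text{ free at time }t)$ is non-increasing in $t$, for any $\widetilde{\pi}_R$ and any $(x,y)\in V^2$,
\[
v^G_{(x,y)}(\widetilde{\pi}_C^\#,\widetilde{\pi}_R)=\sum_{t=0}^\infty\Pr(R\text{ free at time }t)\le\sum_{k=0}^\infty T\bigl(1-n^{-T}\bigr)^k=Tn^T<\infty,
\]
so $\widehat{v}^G_{(x,y)}\le Tn^T$ for every $(x,y)\in V^2$, giving $V_1=V^2$.

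The main obstacle is Part 2: producing an \emph{exactly} (not merely $\varepsilon$-) optimal stationary cop strategy even at states where the value might a priori be infinite. This is the delicate asymmetry between Minimizer and Maximizer in positive stochastic games, and it is precisely what forces an appeal to the general theorem of \cite{filar1997competitive} rather than a direct Shapley-style contraction/fixed-point argument (which would work only in the discounted or uniformly transient case). The other ingredients — framing CCCR as a positive stochastic game, deriving Part 3, and the Borel-Cantelli-style bound reused from Lemma \ref{prp0303} for Part 4 — are essentially bookkeeping.
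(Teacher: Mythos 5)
Your proposal is correct and follows essentially the same route as the paper: Parts 1--3 by appeal to the general theory of finite positive stochastic games (the paper's primary citation is \cite{gurel2009pursuit}, but its Remark \ref{prp0405} explicitly endorses your alternative appeal to \cite{filar1997competitive}), and Part 4 by reusing the guessing strategy $\widetilde{\pi}_{C}^{\#}$ of Lemma \ref{prp0303} together with the geometric bound $\Pr\left(A_{k}\right)\leq\left(1-n^{-T}\right)^{k}$. Your tail-sum computation $\sum_{t}\Pr\left(R\text{ free at }t\right)\leq Tn^{T}$ is a slightly cleaner (and tighter) version of the paper's bound $\sum_{k}kT\Pr\left(A_{k}\right)$, but the underlying argument is the same.
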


\begin{proof}
Parts 1 and 2 follow immediately from the results of \cite{gurel2009pursuit}.
Part 3, the partition of $V^{2}$ into $V_{1}$ and $V_{2}$, is just a
definition. It remains to show part 4, i.e., that $c\left(  G\right)
=1\Rightarrow V_{1}=V^{2}$. This will also follow from \cite{gurel2009pursuit}
if we can show the existence of a cop strategy $\widetilde{\pi}_{C}^{\#}$ and
a constant $M_{G}$ such that
\begin{equation}
\forall\widetilde{\pi}_{R},x,y:v_{\left(  x,y\right)  }^{G}\left(
\widetilde{\pi}_{C}^{\#},\widetilde{\pi}_{R}\right)  \leq M_{G}<\infty
;\label{eq0511}%
\end{equation}
in other words, $\widetilde{\pi}_{C}^{\#}$ guarantees finite (not necessarily
optimal) capture time for every robber strategy and every starting position.

The required $\widetilde{\pi}_{C}^{\#}$ is the strategy used in the proof of
Lemma \ref{prp0303}. Indeed, recall that%
\[
A_{k}=\left\{  \mathbf{s}:\mathbf{s}\in\widetilde{S}^{\infty}\text{ and
}R\text{ is still free after the first }k\cdot T\text{ rounds}\right\}
\]
and when $C$ \ uses $\widetilde{\pi}_{C}^{\#}$ $\ $and $R$ uses any
$\widetilde{\pi}_{R}$ we have%
\[
\sum_{k=1}^{\infty}\Pr\left(  A_{k}\right)  \leq\sum_{k=1}^{\infty}\left(
1-\left(  \frac{1}{n}\right)  ^{T}\right)  ^{k}.
\]
It follows that
\begin{align*}
\forall\widetilde{\pi}_{R}  :v_{x_{0},y_{0}}^{G}\left(  \widetilde{\pi}_{C}^{\#},\widetilde{\pi}_{R}\right)  &  =\mathbb{E}\left(  \sum_{t=0}^{\infty}r\left(  x_{t},y_{t}\right)  \right) \\
                                                                                                             &  \leq\sum_{k=1}^{\infty}k\cdot T\cdot\Pr\left(  A_{k}\right)  \leq\sum_{k=1}^{\infty}k\cdot T\cdot\left(  1-\left(  \frac{1}{n}\right)  ^{T}\right)^{k}=\widetilde{M}_{G,x,y}<\infty.
\end{align*}
Letting $M_{G}=\max_{\left(  x,y\right)  \in V^{2}}\widetilde{M}%
_{G,x,y}<\infty$, where $M_{G}$ depends only on $G$, we see that
$\widetilde{\pi}_{C}^{\#}$ satisfies (\ref{eq0511}) and the proof is complete.
\end{proof}

\begin{remark}
\label{prp0404}\normalfont The theorem can be extended to the game
$\Gamma_{\left(  x,y\right)  }^{G,K}$\ for any graph $G$ (with any
$\widetilde{c}\left(  G\right)  $), any number of cops $K$ and any initial
position $\left(  x,y\right)  $ (we will now have $x\in V^{K}$). If
$K\geq\widetilde{c}\left(  G\right)  $, then $V_{1}=V^{K+1}$. Note that the
set $V_{1}$ will never be empty; for example, when $K=1$, $\left(  x,x\right)
$ belongs to $V_{1}$ for any $\widetilde{c}\left(  G\right)  \in\mathbb{N}$
(since $v_{\left(  x,x\right)  }^{G}\left(  \widetilde{\pi}_{C},\widetilde
{\pi}_{R}\right)  =0$ for any $G,x,\widetilde{\pi}_{C},\widetilde{\pi}_{R}$).
\end{remark}

\begin{remark}
\label{prp0405}\normalfont Parts 1, 2 and 3 of the theorem can also be proved
immediately using the results of either \cite{filar1997competitive} or
\cite{kumar1981existence}.
\end{remark}

\subsection{Computation of Value and Optimal Strategies}

The value and optimal strategies of $\Gamma_{\left(  x,y\right)  }^{G}$ can be
computed by \emph{value iteration}, as shown by Theorem \ref{prp0406}. Before
presenting the theorem and its proof let us give its intuitive justification.

Suppose at time $t$ the game position is $\left(  x,y\right)  $. As already mentioned,
we can assume that the \textquotedblleft\emph{remainder game}"
is $\Gamma_{\left(  x,y\right)  }^{G}$, i.e., it can be played as a new
CCCR\ game starting at $\left(  x,y\right)  $; the remainder game has value
$\widehat{v}_{\left(  x,y\right)  }^{G}$. Suppose further that $C$ uses the
move $u$ and $R$ uses the move $v$. The new game position is
\[
\left(  x^{\prime},y^{\prime}\right)  =\widetilde{Q}\left(  \left(  x^{\prime
},y^{\prime}\right)  ,u,v\right)  ,
\]
$R$ receives
\[
r\left(  x^{\prime},y^{\prime}\right)  =r\left(  \widetilde{Q}\left(  \left(
x,y\right)  ,u,v\right)  \right)
\]
units from $C$ and, invoking memorylessness again, the remainder-game is
$\Gamma_{\left(  x^{\prime},y^{\prime}\right)  }^{G}=\Gamma_{\widetilde
{Q}\left(  \left(  x,y\right)  ,u,v\right)  }^{G}$ and has value $\widehat
{v}_{\left(  x^{\prime},y^{\prime}\right)  }^{G}=\widehat{v}_{\widetilde
{Q}\left(  \left(  x,y\right)  ,u,v\right)  }^{G}$. To describe the
relationship between $\widehat{v}_{\left(  x,y\right)  }^{G}$ and $\widehat
{v}_{\widetilde{Q}\left(  \left(  x,y\right)  ,u,v\right)  }^{G}$ we need some
new notation.

Recall that a finite \emph{two-person zero-sum game in normal form} can be
specified by a single $M\times N$ matrix $P$ \cite{osborne1994game}. The game
is played in a single round as follows:\ \emph{simultaneously} the maximizing
Player 1 chooses the row index $m$ and the minimizing Player 2 chooses the
column index $n$; then Player 2 pays to Player 1 the amount $A_{mn}$. It is
well known that every such game has a value and many algorithms are available
to compute it. We denote the game matrix $A$ by the notation $\left\{
A_{mn}\right\}  _{m=1,...,M}^{n=1,...,N}$ and its value by $\mathbf{Val}%
\left[  \left\{  A_{mn}\right\}  _{m=1,...,M}^{n=1,...,N}\right]  $.

It seems reasonable (and can be rigorously justified) that $\Gamma_{\left(
x,y\right)  }^{G}$ can be considered as a single-round finite two-person
zero-sum game as follows: when $C$ chooses move $u$ and $R$ chooses move $v$
the payoff to $R$ is%
\begin{equation}
r\left(  \widetilde{Q}\left(  \left(  x,y\right)  ,u,v\right)  \right)
+\widehat{v}_{\widetilde{Q}\left(  \left(  x,y\right)  ,u,v\right)  }^{G}.
\label{eq0521}%
\end{equation}
In other words, $R$ receives $r\left(  \widetilde{Q}\left(  \left(
x,y\right)  ,u,v\right)  \right)  $ units as the payoff of the current round
and $\widehat{v}_{\widetilde{Q}\left(  \left(  x,y\right)  ,u,v\right)  }^{G}$
units as the payoff of the \textquotedblleft remainder-game" $\Gamma_{\widetilde{Q}\left(
\left(  x,y\right)  ,u,v\right)  }^{G}$ (which is assumed to be played
optimally by both players). Hence the game matrix of $\Gamma_{\left(
x,y\right)  }^{G}$ is $\left\{  r\left(  \widetilde{Q}\left(  \left(
x,y\right)  ,u,v\right)  \right)  +\widehat{v}_{\widetilde{Q}\left(  \left(
x,y\right)  ,u,v\right)  }^{G}\right\}  _{u\in V}^{v\in V}$ and has value%
\begin{equation}
\widehat{v}_{\left(  x,y\right)  }^{G}=\mathbf{Val}\left[  \left\{  r\left(
\widetilde{Q}\left(  \left(  x,y\right)  ,u,v\right)  \right)  +\widehat
{v}_{\widetilde{Q}\left(  \left(  x,y\right)  ,u,v\right)  }^{G}\right\}
_{u\in V}^{v\in V}\right]  . \label{eq0522}%
\end{equation}
Note that (\ref{eq0522})\ holds when $x\neq y$; for $x=y$ we obviously have
$\widehat{v}_{\left(  x,y\right)  }^{G}=0$.

The above is an informal argument for the connection between the values
$\widehat{v}_{\left(  x,y\right)  }^{G}$. The following theorem shows that the
argument can be made rigorous; furthermore, the theorem provides a method for
computing the values, as well as the optimal strategies.

\begin{theorem}
\label{prp0406}For every graph $G=\left(  V,E\right)  $ with $\widetilde
{c}\left(  G\right)  =1$ and for every $\left(  x,y\right)  \in V^{2}$ the
values $\left\{  \widehat{v}_{\left(  x,y\right)  }^{G}\right\}  _{\left(
x,y\right)  \in V^{2}}$ are the smallest (componentwise) positive solution of
the system of \emph{optimality equations}:%
\begin{align}
\widehat{v}_{\left(  x,y\right)  }^{G}  &  =\mathbf{Val}\left[  \left\{
r\left(  \widetilde{Q}\left(  \left(  x,y\right)  ,u,v\right)  \right)
+\widehat{v}_{\widetilde{Q}\left(  \left(  x,y\right)  ,u,v\right)  }%
^{G}\right\}  _{u\in V}^{v\in V}\right]  \quad\text{when }x\neq
y,\label{eq0541}\\
\widehat{v}_{\left(  x,y\right)  }^{G}  &  =0\quad\text{when }x=y.
\label{eq0542}%
\end{align}
Furthermore, for $n=0,1,2,...$, define the initial conditions
\[
v_{\left(  x,y\right)  }^{G}\left(  0\right)  \geq0\ \text{when }x\neq
y\quad\text{and}\quad v_{\left(  x,y\right)  }^{G}\left(  0\right)
=0,0\quad\text{when }x=y
\]
and, for $n\in\mathbb{N}$, the recursion (\emph{value iteration})
\begin{align}
v_{\left(  x,y\right)  }^{G}\left(  n+1\right)   &  =\mathbf{Val}\left[
\left\{  r\left(  \widetilde{Q}\left(  \left(  x,y\right)  ,u,v\right)
\right)  +v_{\widetilde{Q}\left(  \left(  x,y\right)  ,u,v\right)  }%
^{G}\left(  n\right)  \right\}  _{u\in V}^{v\in V}\right]  \quad\text{when
}x\neq y,\label{eq0543}\\
v_{\left(  x,y\right)  }^{G}\left(  n+1\right)   &  =0\quad\quad\text{when
}x=y. \label{eq0544}%
\end{align}
Then
\[
\forall\left(  x,y\right)  \in V^{2}:\lim_{n\rightarrow\infty}v_{\left(
x,y\right)  }^{G}\left(  n\right)  =\widehat{v}_{\left(  x,y\right)  }^{G}.
\]

\end{theorem}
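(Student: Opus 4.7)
The plan is to treat this as a standard positive stochastic game result in the spirit of Shapley and \cite{filar1997competitive}, organised in three stages: verifying that $\widehat{v}_{(x,y)}^G$ satisfies the optimality equations (the Shapley/Bellman principle), a \emph{smallness} argument showing that $\widehat{v}^G$ is minimal among non-negative solutions, and the convergence of value iteration. For Stage~1, the case $x=y$ is immediate from the freezing of tokens after capture. For $x\neq y$, I would use the memoryless optimal cop strategy and memoryless $\varepsilon$-optimal robber strategies supplied by Theorem~\ref{prp0403} to split the game at time $0$ into one round followed by a continuation; the memorylessness of CCCR guarantees that, once the new position $(x',y') = \widetilde{Q}((x,y),u,v)$ is reached, the remaining play is again a fresh $\Gamma_{(x',y')}^G$ with value $\widehat{v}_{(x',y')}^G$. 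This reduces the analysis at $(x,y)$ to the single-round matrix game with entries $r(\widetilde{Q}((x,y),u,v)) + \widehat{v}_{\widetilde{Q}((x,y),u,v)}^G$, whose $\mathbf{Val}$ equals $\widehat{v}_{(x,y)}^G$, giving (\ref{eq0541}).

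For Stage~2 I would take any non-negative $w = \{w_{(x,y)}\}$ satisfying (\ref{eq0541})--(\ref{eq0542}) and build a memoryless cop strategy $\widetilde{\pi}_C^w$: in state $(x,y)$ with $x\neq y$, play the column player's optimal mixed strategy in the matrix game with entries $r(\widetilde{Q}((x,y),u,v)) + w_{\widetilde{Q}((x,y),u,v)}$. By the saddle-point property of $\mathbf{Val}$, against every robber strategy we obtain the one-step inequality $\mathbb{E}[\,r(x_{t+1},y_{t+1}) + w_{(x_{t+1},y_{t+1})} \mid (x_t,y_t)\,] \leq w_{(x_t,y_t)}$. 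Telescoping over $t=0,\ldots,N-1$, then letting $N\to\infty$ by monotone convergence and discarding the non-negative terminal term $\mathbb{E}[w_{(x_N,y_N)}]$, yields $v_{(x,y)}^G(\widetilde{\pi}_C^w,\widetilde{\pi}_R) \leq w_{(x,y)}$ for every $\widetilde{\pi}_R$. Hence $\widehat{v}_{(x,y)}^G \leq w_{(x,y)}$, establishing minimality.

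Finally, for Stage~3 I would define the operator $T$ by the RHS of (\ref{eq0543})--(\ref{eq0544}); $T$ is monotone and continuous because $\mathbf{Val}$ is. Starting from $v^G(0)=0$, the inequality $T(0) \geq 0$ together with monotonicity yields $v^G(n) \leq v^G(n+1)$, while Stage~1 (which makes $\widehat{v}^G$ a fixed point of $T$) combined with Theorem~\ref{prp0403}.4 (so $\widehat{v}^G < \infty$) gives $v^G(n) \leq \widehat{v}^G$ inductively. Thus $v^G(n)$ converges monotonically to some $v^{\ast} \leq \widehat{v}^G$, and continuity of $T$ forces $v^{\ast}$ to be a fixed point; applying Stage~2 with $w=v^{\ast}$ gives $\widehat{v}^G \leq v^{\ast}$, so $v^{\ast} = \widehat{v}^G$. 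The main obstacle is Stage~2: the telescoping identity must be carried through a monotone convergence argument, and the non-negativity $w \geq 0$ is essential --- without it one cannot discard the terminal term $\mathbb{E}[w_{(x_N,y_N)}]$ when passing to the limit, and the cop strategy induced by $w$ would not be guaranteed to realise expected total payoff bounded by $w$.
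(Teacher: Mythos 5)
Your proposal is mathematically sound, but it takes a different route from the paper in the sense that you prove from first principles what the paper simply imports: the paper's entire proof is a two-line reduction to Theorems 4.4.3 and 4.4.4 of Filar--Vrieze on positive stochastic games, noting only that in CCCR the transition probabilities are degenerate (each $P((x',y')\mid(x,y),u,v)$ is $0$ or $1$ according to whether $(x',y')=\widetilde{Q}((x,y),u,v)$) and that capture positions are absorbing with value $0$, so the general optimality equations and value iteration collapse to (\ref{eq0541})--(\ref{eq0544}). Your three stages --- the Shapley decomposition into one round plus a remainder game, the minimality argument via the cop strategy induced by an arbitrary non-negative solution $w$ with the telescoping/martingale inequality and discarding of the terminal term $\mathbb{E}[w_{(x_N,y_N)}]$, and the monotone value iteration sandwiched below the finite $\widehat{v}^G$ (finiteness coming from Theorem \ref{prp0403}.4 via $\widetilde{c}(G)=c(G)=1$) --- are precisely the content of those cited theorems, correctly reconstructed; your observation that $w\geq 0$ is what licenses dropping the terminal term is exactly the point that distinguishes positive stochastic games from discounted ones. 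What the paper's approach buys is brevity and a clean delegation of the measure-theoretic details; what yours buys is a self-contained argument that makes visible where each hypothesis is used. Two small caveats: your Stage 3 establishes convergence only from the initial condition $v^G(0)=0$, whereas the theorem as stated allows any $v^G(0)\geq 0$ off the diagonal (for which monotonicity of the iterates is lost, and which would require a separate squeezing argument --- indeed the claim in that generality is delicate, since a non-minimal fixed point, if one existed, would be stationary under $T$); and note that the paper's payoff $\mathbb{E}\left(\sum_{t=0}^{\infty}r(x_t,y_t)\right)$ counts the reward of the initial position while the right-hand side of (\ref{eq0541}) charges only the post-transition rewards, an off-by-one in indexing that your Stage 1 derivation inherits from the statement rather than introduces.
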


\begin{proof}
This is essentially the combination of Theorems 4.4.3 and 4.4.4 from
\cite{filar1997competitive}, but the following modifications are required.

In \cite{filar1997competitive} the optimality equations (\ref{eq0541}%
)-(\ref{eq0542}) and the recursion (\ref{eq0543})-(\ref{eq0544}) are given in
terms of transition probabilities which in our notation would be written as
$P\left(  \left(  x^{\prime},y^{\prime}\right)  |\left(  x,y\right)
,u,v\right)  $; this is the probability that the new position is $\left(
x^{\prime},y^{\prime}\right)  $ given the old position is $\left(  x,y\right)
$ and the player moves are $u$ and $v$. However, in CCCR transitions are
deterministic, i.e.,
\[
P\left(  \left(  x^{\prime},y^{\prime}\right)  |\left(  x,y\right)
,u,v\right)  =\left\{
\begin{array}
[c]{ll}%
1 & \text{when }\left(  x^{\prime},y^{\prime}\right)  =\widetilde{Q}\left(
\left(  x,y\right)  ,u,v\right) \\
0 & \text{otherwise.}%
\end{array}
\right.
\]
Furthermore, once the game reaches a capture position $\left(  x,x\right)  $,
it will always stay in this position, which has value $\widehat{v}_{\left(
x,x\right)  }^{G}=0$.

Taking the above in account, the optimality equations and the recursion of
\cite{filar1997competitive} reduce to (\ref{eq0541})-(\ref{eq0544}).
\end{proof}

\begin{remark}
\label{prp0407}\normalfont The modification of the theorem theorem for the
game $\Gamma_{\left(  x,y\right)  }^{G,K}$, with $K>1$, is obvious.
\end{remark}

We conclude this section with some examples. We apply the value iteration
(\ref{eq0543})-(\ref{eq0544}) to several graphs and discuss the results.

\begin{example}
\label{prp0408}\normalfont In the first example $G$ is a path of five nodes,
as illustrated in Figure \ref{fig01}. 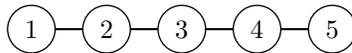
\begin{figure}[H]
\begin{center}
\begin{tikzpicture}
\SetGraphUnit{2}
\Vertex[x=5+1,y=0]{1}
\Vertex[x=5+2,y=0]{2}
\Vertex[x=5+3,y=0]{3}
\Vertex[x=5+4,y=0]{4}
\Vertex[x=5+5,y=0]{5}
\Edge(1)(2)
\Edge(2)(3)
\Edge(3)(4)
\Edge(4)(5)
\SetVertexNoLabel
\end{tikzpicture}
\end{center}
\par
\label{fig01}\caption{A five node path.}%
\end{figure}\noindent Let the $\left(  x,y\right)  $ element of matrix
$\widehat{V}^{G}$ be equal to $\widehat{v}_{\left(  x,y\right)  }^{G}$, the
value of $\Gamma_{\left(  x,y\right)  }^{G}$ when the cop is at node $x$ and
the robber at node $y$. Value iteration yields%
\[
\widehat{V}^{G}=\left[
\begin{array}
[c]{lllll}%
0 & 4 & 4 & 4 & 4\\
1 & 0 & 3 & 3 & 3\\
2 & 2 & 0 & 2 & 2\\
3 & 3 & 3 & 0 & 1\\
4 & 4 & 4 & 4 & 0
\end{array}
\right]  .
\]
Cop and robber optimal strategies can be described quite easily:\ the cop
should always move towards the robber and the robber should always move away
from the robber\footnote{Actually the robber has several other optimal
strategies; he can also stay in place if the cop is at a distance greater than
one. These are pure (i.e., deterministic) strategies; the robber also has
mixed optimal strategies.}. Clearly in this graph the CCCR time-optimal
strategies are the same as those for the TBCR\ game.
\end{example}

\begin{example}
\label{prp0409}\normalfont Not surprisingly, for the tree $G$ illustrated in
Figure \ref{fig02}, the optimal cop and robber strategies are again the same
for the CCCR\ and TBCR\ games.

\begin{figure}[H]
\begin{center}
\begin{tikzpicture}
\SetGraphUnit{2}
\Vertex[x= 0,y= 0]{1}
\Vertex[x=-1,y=-1]{2}
\Vertex[x= 1,y=-1]{3}
\Vertex[x= 0,y=-2]{4}
\Vertex[x= 2,y=-2]{5}
\Edge(1)(2)
\Edge(1)(3)
\Edge(3)(4)
\Edge(3)(5)
\SetVertexNoLabel
\end{tikzpicture}
\end{center}
\par
\label{fig02}\caption{A tree.}%
\end{figure}\noindent The value-iteration algorithm yields%
\[
\widehat{V}^{G}=\left[
\begin{array}
[c]{lllll}%
0 & 1 & 2 & 2 & 2\\
3 & 0 & 3 & 3 & 3\\
2 & 2 & 0 & 1 & 1\\
3 & 3 & 3 & 0 & 2\\
3 & 3 & 3 & 2 & 0
\end{array}
\right]
\]

\end{example}

\begin{example}
\label{prp0410}\normalfont
The next example involves the clique of three nodes, as illustrated in Figure
\ref{fig03}.

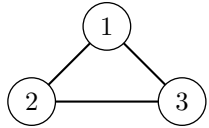
\begin{figure}[H]
\begin{center}
\begin{tikzpicture}
\SetGraphUnit{2}
\Vertex[x= 0,y= 0]{1}
\Vertex[x=-1,y=-1]{2}
\Vertex[x= 1,y=-1]{3}
\Edge(1)(2)
\Edge(1)(3)
\Edge(2)(3)
\SetVertexNoLabel
\end{tikzpicture}
\end{center}
\par
\label{fig03}\caption{A three node clique.}%
\end{figure}\noindent After eight iterations, the algorithm yields
$\overline{V}^{G}$ which is (componentwise) within 10$^{-2}$ of the true
solution%
\[
\widehat{V}^{G}=\left[
\begin{array}
[c]{lll}%
0 & 2 & 2\\
2 & 0 & 2\\
2 & 2 & 0
\end{array}
\right]  .
\]
The algorithm also yields the optimal strategies, which are symmetrical with
respect to the cop and robber positions $\left(  x_{0},y_{0}\right)  $. For
example, when $\left(  x_{0},y_{0}\right)  =\left(  3,1\right)  $ we have%
\[
\pi_{C}^{\ast}=\left[  \frac{1}{2},\frac{1}{2},0\right]  \text{ and }\pi
_{C}^{\ast}=\left[  0,\frac{1}{2},\frac{1}{2}\right]  .
\]
In other words, under these strategies, both cop and robber always move with
equal probability to one of the two nodes they don't currently occupy. It can
be verified analytically that these strategies yield the previously displayed
value matrix $\widehat{V}^{G}$. Because of symmetry, many other optimal
strategies exist for both cop and robber.
\end{example}

\begin{example}
\label{prp0411}\normalfont The final example involves a Gavenciak graph
\cite{gavenciak2010copwin}, as illustrated in Figure \ref{fig04}.

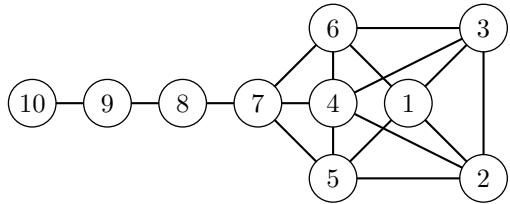
\begin{figure}[H]
\begin{center}
\begin{tikzpicture}
\SetGraphUnit{2}
\Vertex[x= 2,y=-1]{1}
\Vertex[x= 3,y=-2]{2}
\Vertex[x= 3,y= 0]{3}
\Vertex[x= 1,y=-1]{4}
\Vertex[x= 1,y=-2]{5}
\Vertex[x= 1,y= 0]{6}
\Vertex[x= 0,y=-1]{7}
\Vertex[x=-1,y=-1]{8}
\Vertex[x=-2,y=-1]{9}
\Vertex[x=-3,y=-1]{10}
\Edge(1)(2)
\Edge(1)(3)
\Edge(1)(5)
\Edge(1)(6)
\Edge(2)(3)
\Edge(2)(5)
\Edge(2)(4)
\Edge(3)(6)
\Edge(3)(4)
\Edge(4)(5)
\Edge(4)(6)
\Edge(4)(7)
\Edge(5)(7)
\Edge(6)(7)
\Edge(7)(8)
\Edge(8)(9)
\Edge(9)(10)
\SetVertexNoLabel
\end{tikzpicture}
\end{center}
\par
\label{fig04}\caption{A Gavenciak graph.}%
\end{figure}\noindent From the results of \cite{gavenciak2010copwin} we know
that the TBCR\ capture time of this graph is 7 (this is the minimax of the
capture time over all initial positions). The cop is able to achieve this
capture time by first maneuvering himself to node 7 and forcing the robber into the
path subgraph, and then chasing the robber all the way to node 10. In the CCCR
game the results are similar but they require the use of randomized
strategies. We do not present the entire $\widehat{V}^{G}$ (because of space
limitations) but let us give some indicative results. For example, when the
initial positions are $\left(  x_{0},y_{0}\right)  =\left(  2,1\right)  $, the
cop cannot be certain of capturing the robber in one move (since the moves are
simultaneous). It turns out that, by the application of randomized strategies,
the optimal expected capture time is $\widehat{v}_{2,1}^{G}\cong18.82...$ .
However, the part of the strategies which concerns the path subgraph is, as in
TBCR, deterministic. For inctance once the cop reaches node 8 (with the robber
in either node 9 or 10) he should deterministically perform the transitions 
$8\rightarrow 9\rightarrow10$. Let us also note that for this ten-nodes
graph, the value iteration algorithm required 90 iterations to get
(componentwise) within 10$^{-2}$ of the true solution.
\end{example}

\section{Related Work\label{sec05}}

While the assumption of simultaneous moves is a natural one (and is better
than turn-based movement as a model of real world pursuit / evasion problems)
it appears that CCCR\ has not been studied in the cops and robber literature.
However, our analysis of time optimal CCCR strategies follows closely the
corresponding study of time optioimal TBCR\ strategies presented in
\cite{hahn2006game} (and expanded in \cite{bonato2012general}). Both Hahn's
algorithm in \cite{hahn2006game} and the recursion (\ref{eq0543}) of Theorem
\ref{prp0406} are value iteration algorithms. The main difference between the
two is this:\ while in Hahn's algorithm updating the value in every iteration
only requires taking a minimum or a maximum, \emph{every value iteration } 
of (\ref{eq0543}) requires solving a one-round,
zero-sum game (this is indicated by the $\mathbf{Val}\left[  \cdot\right]  $ operator
in (\ref{eq0543})). Consequently, (\ref{eq0543})\ is computationally more
intensive than Hahn's algorithm.

As already mentioned, simultaneous moves have not been explored in the
CR\ literature. On the other hand, an interesting analog can be found in the
literature of \emph{reachability games}
\cite{berwangergraph,mazala2002infinite}. As we have pointed out in
\cite{kehagias2014role,kehagias2014zermelo}, TBCR is a special case of a
\textquotedblleft classical\textquotedblright\ (i.e., turn-based) reachability
game. Similarly, CCCR\ is a special case of a \emph{concurrent} reachability
game\footnote{This is the source of our term \textquotedblleft concurrent CR game\textquotedblright.}. The
literature on concurrent reachability games \cite{alfaro2007concurrent} can
furnish useful insights for the analysis of CCCR.

All the above problems can be considered as special cases of the general
\emph{stochastic game}. The book \cite{filar1997competitive} is an excellent,
comprehensive and relatively recent study of the topic; it also contains many
references to important earlier work.

\section{Concluding Remarks\label{sec06}}

We conclude this paper by presenting questions which, in our opinion, merit
further study.

One group of questions concern the definition of cop number, which in turn
depends on the definition of the properties of the \emph{capture event}. To
understand the issue, we must turn back to concurrent reachability games. Let
$A$ be the set of all histories of a reachability game and $B\subseteq A$ the
set of all realizations in which the target state is reached (in CCCR, $B$
would be the set of all infinite histories $\left\{  \left(  x_{t},y_{t}\right)  \right\}  _{t=0}^{\infty}$ 
for which there is some $t_{c}$ such
that $x_{t_{c}}=y_{t_{c}}$). As pointed out in \cite{alfaro2007concurrent},
the target state can be reached in at least three different senses (and each
of these implies the next one in the list).

\begin{enumerate}
\item \emph{Sure reachability}:\ $B=A$.

\item \emph{Almost sure reachability}:\ $\Pr\left(  B\right)  =1$.

\item \emph{Limit sure reachability}: For every real $\varepsilon$, player 1 has a
strategy such that for all strategies of player 2, the target state is reached
with probability greater than $1-\varepsilon$.
\end{enumerate}

\noindent The above carry over to CCCR and can be used to define corresponding
cop numbers: $c_{sure}\left(  G\right)  $, $c_{almostsure}\left(  G\right)  $
and $c_{limitsure}\left(  G\right)  $. In this paper we have worked
exclusively with $\widetilde{c}\left(  G\right)  =c_{almostsure}\left(
G\right)  $. Obviously
\[
c_{sure}\left(  G\right)  \geq c_{almostsure}\left(  G\right)  \geq
c_{limitsure}\left(  G\right)
\]
but several additional questions can be asked. For example can the ratios
$\frac{c_{sure}\left(  G\right)  }{c_{almostsure}\left(  G\right)  }$ and
$\frac{c_{sure}\left(  G\right)  }{c_{limit-sure}\left(  G\right)  }$ be
bounded by a constant?\ By a number depending on the size of $G$? How about
the differences $c_{sure}\left(  G\right)  -c_{almostsure}\left(  G\right)  $
and $c_{almostsure}\left(  G\right)  -c_{limitsure}\left(  G\right)  $?

Another group of questions concerns the CCCR\ variants obtained by modifying
the cop's and/or the robber's behavior.

\begin{enumerate}
\item For example, what is the \emph{cost of drunkenness}? In other words,
what is the ratio of expected capture times for the previously descibed CCCR
game and a variant in which the robber performs a random walk on the nodes of
the graph?\ The same question has been studied for the TBCR\ case in
\cite{kehagias20102some,kehagias2013cops}

\item Similarly, what is the \emph{cost of visibility}? In this case we study
the ratio of expected capture times for the previously descibed CCCR game and
a variant in which the robber is invisible to the cop. For the TBCR\ case,
this has been studied in \cite{kehagias2013cops,kehagias2014role}
\end{enumerate}

\end{document}